%% file: main.tex
\documentclass[journal]{IEEEtran}

\PassOptionsToPackage{hyphens}{url}

\usepackage{algorithm}
\usepackage[noEnd=false,indLines=true]{algpseudocodex}
\usepackage{amsmath,amssymb,amsfonts,amsthm,mathrsfs}
\usepackage{adjustbox}
\usepackage{array}
\usepackage{booktabs}
\usepackage{cite}
\usepackage{graphicx}
\usepackage{stfloats}
\usepackage{tabularx}
\usepackage{textcomp}
\usepackage[hidelinks]{hyperref}

\hypersetup{breaklinks=true}

\input{Congmacros}

\newlength{\eqtagspace}
\newcommand{\eqfitu}[1]{%
  \adjustbox{max width=\linewidth}{\(\displaystyle #1\)}%
}

\newtheorem{proposition}{Proposition}

\newtheorem{lemma}{Lemma}

\newtheorem{remark}{Remark}

\newcommand*{\QED}{\hfill\ensuremath{\square}}

\begin{document}

\title{Cost-Informed Learning for Aggregating Building HVAC Flexibility}

\author{Jingguan Liu,~Cong Chen,~Xiaomeng Ai,~Jiakun Fang,~Jinsong Wang,~and Jinyu Wen
\thanks{This work was supported by the National Natural Science Foundation of China under Grant U25B6017.~\textit{(Corresponding authors:~Cong Chen;~Xiaomeng Ai.)}}%
\thanks{Jingguan Liu, Xiaomeng Ai, Jiakun Fang, and Jinyu Wen are with the State Key Laboratory of Advanced Electromagnetic Technology, Huazhong University of Science and Technology, Wuhan 430074, China (e-mail: \url{spencerplusmail@foxmail.com}; 
\url{xiaomengai@hust.edu.cn}; \url{jfa@hust.edu.cn}; \url{jinyu.wen@hust.edu.cn}).}%
\thanks{Cong Chen is with the Thayer School of Engineering, Dartmouth College, Hanover, NH 03755, USA (e-mail: \url{Cong.Chen@dartmouth.edu}).}%
\thanks{Jinsong Wang is with HyperStrong Technology Co., Ltd., Beijing 100094, China (e-mail: \url{wangjinsong@hyperstrong.com}).}
}

\markboth{under review}%
{Liu \MakeLowercase{\textit{et al.}}: Cost-Informed Learning for Aggregating Building HVAC Flexibility}

\maketitle

\begin{abstract}
This paper develops a cost-informed aggregation framework that learns an aggregate flexibility set of building heating, ventilation, and air-conditioning (HVAC) loads to minimize the aggregator's dispatch cost. Existing aggregation methods mainly use volume-oriented objectives and treat flexibility aggregation and downstream utilization as separate stages. Consequently, the resulting aggregate set may fail to preserve the flexibility most valuable for reducing downstream dispatch costs. To address this limitation, we represent aggregate HVAC flexibility using a parameterized storage-form surrogate and jointly learn the surrogate parameters and the inner-approximation objective from downstream dispatch-cost feedback. This cost-informed feedback allocates the surrogate's limited representation capacity to cost-relevant regions of the aggregate flexibility set. To account for electricity-price uncertainty, we formulate a distributionally robust conditional value-at-risk (DR-CVaR) downstream problem that captures both distributional ambiguity and tail risk. For efficient learning, we reformulate the DR-CVaR problem as a convex second-order cone program. We estimate cost gradients using randomized smoothing and a score-function estimator, avoiding differentiation through large-scale building-level optimization problems. Case studies using NYISO price data show that the proposed framework reduces dispatch costs relative to a volume-oriented aggregation benchmark and that suitable risk and ambiguity settings can lower the out-of-sample CVaR of dispatch costs.
\end{abstract}

\begin{IEEEkeywords}
Building HVAC systems, demand-side aggregation, cost-informed learning.
\end{IEEEkeywords}

\section{Introduction}\label{sec:intro}

\input{I_Intro}

\section{Problem Formulation}\label{sec:problem}
\input{II_Problem}

\section{Cost-Informed Learning Framework}\label{sec:framework}
\input{III_Framework}

\section{Case Studies}\label{sec:case}
\input{IV_Case}

\section{Conclusion}\label{sec:conclu}
This paper proposed a cost-informed learning framework for aggregating building HVAC flexibility. The framework learns both the storage-surrogate parameters and inner-approximation objective from downstream utilization-cost feedback, and embeds a DR-CVaR task to account for distributional ambiguity and tail risk. For HVAC aggregators, our method provides a storage-form aggregate model for downstream cost minimization, while preserving economically useful flexibility and ensuring disaggregation feasibility.

The case studies demonstrate that cost-informed learning improves downstream economic performance by preserving flexibility in cost-active directions while maintaining computational tractability. The results also show that, under unseen price-distribution shifts, the proposed risk- and ambiguity-aware formulation improves out-of-sample robustness and reduces high-cost tail outcomes for suitable risk and ambiguity settings among those tested. Overall, the proposed approach provides a principled link between physically feasible HVAC flexibility aggregation and economically effective flexibility utilization. 

Future work will focus on systematic calibration of the DR-CVaR parameters to better balance economic performance and out-of-sample robustness.

%\appendices
%\input{Appendix}

\section*{Acknowledgments}
The authors would like to thank Prof. Eilyan Bitar from Cornell University for many insightful discussions.

\bibliographystyle{IEEEtran}
\bibliography{BIB}

\clearpage
\appendices

\input{Appendix}

\end{document}

%% file: Congmacros.tex
\definecolor{revisionblue}{RGB}{0,112,192}
\newif\ifmarkrevisions
\markrevisionsfalse
\makeatletter
\def\rev@blue{0 0.43922 0.75294 rg 0 0.43922 0.75294 RG}
\def\rev@black{0 g 0 G}
\newcommand{\rev@set}[1]{%
	\let\current@color#1%
	\pdfcolorstack\@pdfcolorstack set{\current@color}%
}
\DeclareRobustCommand{\revstart}{\ifmarkrevisions\leavevmode\rev@set\rev@blue\fi}
\DeclareRobustCommand{\revend}{\ifmarkrevisions\rev@set\rev@black\fi}

\DeclareRobustCommand{\revcolor}{\ifmarkrevisions\rev@set\rev@blue\fi}
\makeatother
\DeclareRobustCommand{\rev}[1]{\ifmarkrevisions\textcolor{revisionblue}{#1}\else#1\fi}

\def\beq{\begin{equation}}
\def\eeq{\end{equation}}
\def\bea{\begin{eqnarray}}
\def\eea{\end{eqnarray}}
\def\ba{\begin{array}}
\def\ea{\end{array}}

\def\bitem{\begin{itemize}}
\def\eitem{\end{itemize}}
\def\ben{\begin{enumerate}}
\def\een{\end{enumerate}}

\def\edoc{\end{document}}

%% file: I_Intro.tex
\IEEEPARstart{C}{oordinated} operation of demand-side distributed energy resources provides an effective way to aggregate flexibility for improving power-system operational efficiency \cite{chen2023scheduling}. Building heating, ventilation, and air-conditioning (HVAC) loads are especially attractive because their fast power response and building thermal inertia allow electricity consumption to be shifted through pre-cooling or pre-heating without compromising occupant comfort \cite{Harris_2026}. Nevertheless, dispatching individual HVAC systems becomes computationally prohibitive at scale. Aggregators address this scalability issue by interfacing between system operators and HVAC fleets. They characterize the collective capability of many HVAC systems as a single aggregate flexibility set and disaggregate an approved aggregate power profile into feasible individual trajectories \cite{chen2026competitive}.

Recent market reforms, such as FERC Order No.~2222, have created pathways for HVAC aggregations to participate in power system operation and wholesale electricity markets \cite{chen2024wholesale}. In practice, an aggregator may participate through either self-dispatch or operator-dispatch modes \cite{CAISO_Sec30_2024}. Under self-dispatch, the aggregator must determine and submit a power dispatch trajectory in advance, making its market performance highly dependent on price-forecast accuracy. By contrast, under operator dispatch, the aggregator submits a compact representation of its aggregate flexibility set, such as a storage-form surrogate with time-varying power and energy limits, and the system operator determines the dispatch within the submitted set. This mode reduces the need for the aggregator to pre-commit to a fixed dispatch trajectory, but makes the quality of the submitted flexibility representation critical.

From the perspective of an HVAC aggregator, the operator-dispatch interface naturally involves two stages. In the upstream \emph{flexibility aggregation} stage, the aggregator encodes the heterogeneous thermal constraints of individual HVAC systems into a compact aggregate flexibility set for submission to the system operator \cite{liu2025leveraging}. In the downstream \emph{flexibility utilization} stage, the system operator determines an aggregate power trajectory within the submitted set under realized operating conditions and communicates the trajectory and the corresponding clearing prices to the aggregator. The aggregator then disaggregates the trajectory into feasible schedules for individual HVAC systems \cite{favaro2025decision}. Under convex market clearing with supporting prices, the dispatched trajectory also minimizes the price-taking aggregator's electricity cost over its submitted set. These two stages are tightly coupled: the aggregate set constructed in the first stage directly determines which dispatch profiles are available in the second. Therefore, an effective aggregate HVAC flexibility set must preserve the fleet's operational flexibility that is most critical for downstream utilization while remaining compatible with operator-facing dispatch interfaces. Reconciling these requirements is challenging because HVAC flexibility is governed by building thermal dynamics that differ from conventional storage dynamics \cite{wu2025energy}, making it difficult to preserve economically useful flexibility in a compact storage-form aggregate set. This motivates a cost-informed aggregation framework that explicitly links aggregate-set construction with downstream utilization.

\subsection{Related Work}
Mathematically, the exact aggregate flexibility set of an HVAC fleet is the Minkowski sum of the individual flexibility sets, for which obtaining an explicit high-dimensional multi-period representation is generally computationally intractable \cite{tiwary2008hardness}. To enable tractable computation, existing studies typically rely on either outer or inner approximations. Outer approximations, such as the direct summation of individual power envelopes in virtual-battery models \cite{song2017thermal,le2022aggregation}, yield simple aggregate representations but may overestimate the available HVAC flexibility \cite{wen2022aggregate}. Consequently, an approved aggregate dispatch may not be feasible at the individual-building level. Inner approximations are therefore widely used, as they guarantee disaggregation feasibility by construction. Representative inner-approximation templates include axis-aligned boxes \cite{chen2021leveraging}, zonotopes \cite{muller2017aggregation}, homothets \cite{hao2014aggregate,liu2024continuous}, vertex-based polytopes \cite{ozturk2024alleviating}, and affine-transformation-based polytopes \cite{al2024efficient,liu2026coupling}. Among these, affine-polytope methods, first developed in \cite{al2024efficient} and later adapted to HVAC aggregation in \cite{liu2026coupling}, have been shown to provide expressive representations of HVAC flexibility.

Despite this progress, two mismatches limit how much of the true HVAC flexibility can be converted into downstream economic value. The first is a \emph{structural mismatch} between the storage-form surrogate and the underlying HVAC flexibility. Although HVAC loads can be interpreted as thermal storage, their thermal dynamics differ from the state evolution of conventional energy storage \cite{liu2026coupling}. Mapping HVAC flexibility into a baseline-plus-storage representation can therefore discard economically useful parts of the true feasible region. The second is an \emph{objective mismatch} between the criterion used to select an inner approximation and the aggregator's actual economic objective \cite{zhu2026profit}. Once the surrogate form limits the representable flexibility, the selection objective determines which directions of the feasible region are retained. Existing inner approximation methods typically rely on geometric, volume-oriented objectives and treat aggregation and downstream utilization as separate stages. Consequently, the resulting representation may fail to preserve the flexibility most valuable for reducing dispatch costs \cite{cui2025smart}.

For a cost-minimizing aggregator, however, geometric size is not the final performance metric. Different price trajectories activate different dispatch points, so two inner approximations with similar volumes can lead to substantially different operating costs when embedded in the downstream utilization problem \cite{cui2024decision}. The economically relevant question is therefore not only how much flexibility is preserved, but also whether the preserved flexibility lies in cost-relevant regions. Historical and forecast electricity prices, which are typically available to aggregators, provide direct information about where flexibility is likely to be valuable \cite{yi2025perturbed}. To the best of our knowledge, this price information has not been systematically exploited to guide flexibility aggregation, especially when the limited representation capacity of a storage-form surrogate must be allocated across different flexibility directions. This observation motivates a cost-informed aggregation approach that uses downstream economic performance, rather than geometric size alone, to guide the design of aggregate HVAC flexibility sets.

Such a shift may offer two practical benefits. First, it avoids relying solely on high-dimensional geometric volume metrics, whose exact evaluation can be computationally challenging in high-dimensional space \cite{dong2026federated}, and instead uses dispatch cost as a direct and tractable performance signal. Second, it enables the aggregate representation to preserve flexibility in economically important regions, thereby reducing the aggregator's dispatch cost. Realizing this idea, however, raises two challenges: (i) the surrogate parameters and inner-approximation objective must be coupled with the downstream utilization so that limited representation capacity is allocated to cost-relevant flexibility regions; and (ii) because price forecasts inevitably deviate from realized prices, the learned aggregate set must generalize to unseen price scenarios and remain robust to distributional shifts and tail-risk events.

\subsection{Main Contributions}
\begin{figure}[!t]
	\centering
	\includegraphics[width=\linewidth]{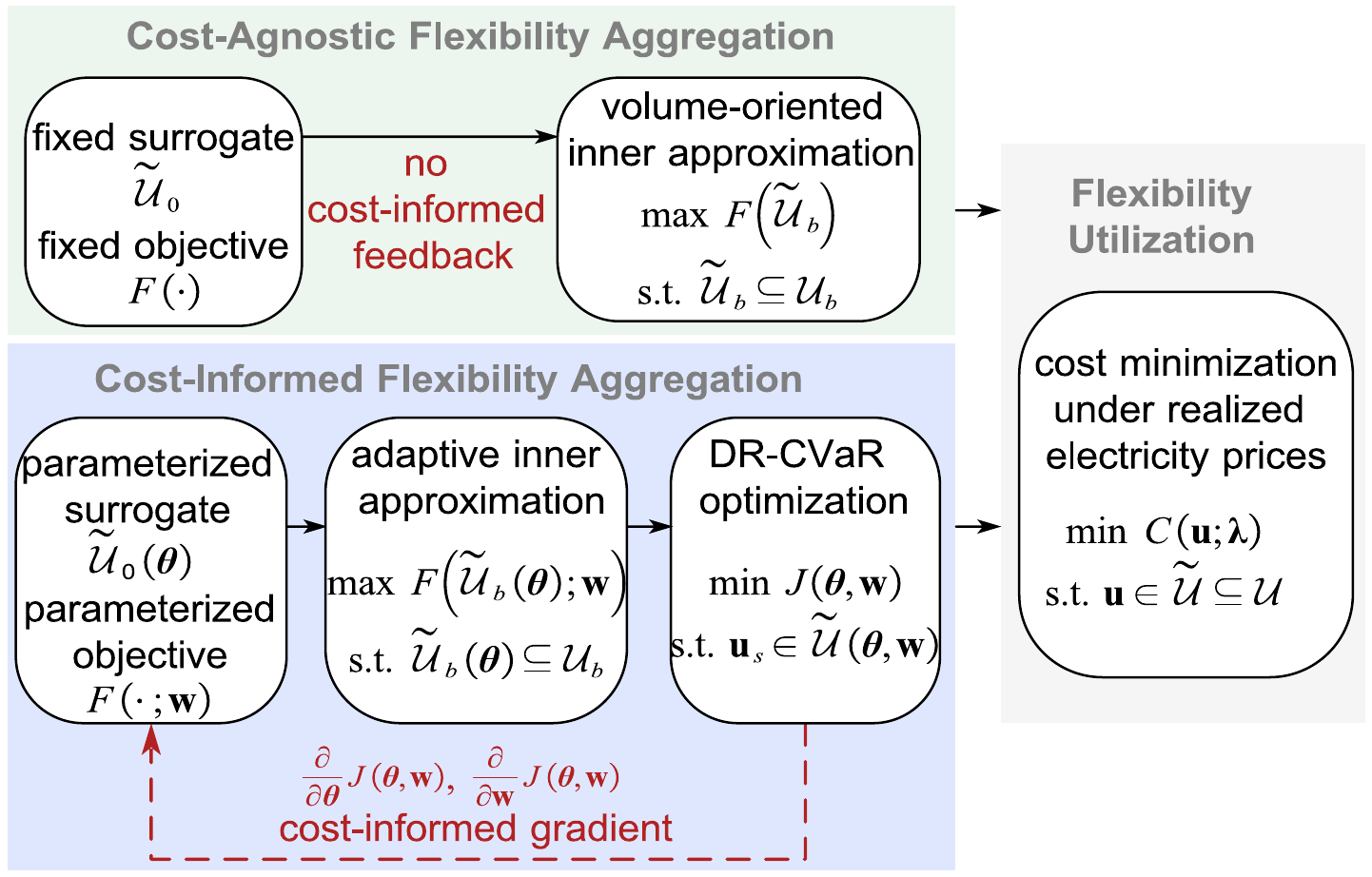}
	\caption{Comparison between cost-agnostic and cost-informed HVAC flexibility aggregation. The notations are defined in Table~\ref{tab:symbols}.  The conventional approach (top) uses a fixed storage-form surrogate and a fixed volume-oriented inner-approximation objective, without any feedback from downstream utilization costs. The proposed approach (bottom) parameterizes both the storage-form surrogate in Eq. \eqref{eq:surrogate_storage_model_compact} and the inner-approximation objective $F(\cdot;\mathbf{w})$ in Eq. \eqref{eq:innerapp_problem_original}, and updates them using cost-informed feedback from the downstream DR-CVaR utilization layer in Eq. \eqref{eq:DRCVAR_problem_SOCP}. The learned aggregate set is then used for realized-price cost minimization in Eq. \eqref{eq:utilize_realizedprice}.}
	\label{fig:Overview}
\end{figure}

\begin{table}[!t]
	\footnotesize
	\caption{Main symbols used in Fig.~\ref{fig:Overview}}
	\begin{center}
		\begin{tabularx}{\linewidth}{lX}
			\hline
			Symbol & Description \\
			\hline
			$\mathcal{U}_b$ & Individual HVAC flexibility set of building $b$. \\
			$\mathcal{U}$ & Exact aggregate HVAC flexibility set. \\
			$\mathcal{U}_0(\boldsymbol{\theta})$ & Parameterized storage-form surrogate \\
			$\boldsymbol{\theta}$ & Learnable storage-surrogate parameter vector. \\
			$\widetilde{\mathcal{U}}_b(\boldsymbol{\theta})$ & Building-level inner-approximation set. \\
			$\widetilde{\mathcal{U}}(\boldsymbol{\theta},\mathbf{w})$ & Aggregate inner-approximation set. \\
			$\mathbf{w}$ & Learnable weight vector for inner approximation. \\
			$F(\cdot;\mathbf{w})$ & Shorthand for the weighted inner-approximation objective in \eqref{eq:innerapp_problem_original}. \\
			$J(\boldsymbol{\theta},\mathbf{w})$ & DR-CVaR utilization cost. \\
			$C(\mathbf{u};\boldsymbol{\lambda})$ & Realized electricity-cost function for aggregate dispatch $\mathbf{u}$ under price $\boldsymbol{\lambda}$. \\
			$\mathbf{u}_s$ & Aggregate dispatch associated with price scenario $s$. \\
			$\partial J/\partial\boldsymbol{\theta},\,\partial J/\partial\mathbf{w}$ & Cost-informed gradients used to update $\boldsymbol{\theta}$ and $\mathbf{w}$. \\
			\hline
		\end{tabularx}
		\label{tab:symbols}
	\end{center}
\end{table}

This paper proposes a \emph{cost-informed learning framework} for aggregating HVAC flexibility. Inspired by recent decision-focused learning approaches that often train predictive components using downstream optimization performance \cite{li2016toward,han2021task,cui2024decision}, our framework does not learn a prediction model, but instead learns a set-valued aggregate flexibility representation. Specifically, we extend the existing cost-agnostic aggregation method in \cite{al2024efficient} by using downstream utilization-cost feedback to jointly learn the surrogate parameters and inner-approximation objective. This closes the loop between flexibility aggregation and downstream utilization, as illustrated in Fig.~\ref{fig:Overview} (the main notation is summarized in Table~\ref{tab:symbols}). 

Our main contributions are threefold:

\subsubsection{Framework}
We develop a novel cost-informed learning framework that couples aggregate HVAC flexibility identification with downstream flexibility utilization. Instead of constructing a volume-oriented aggregate set, our closed-loop formulation adaptively learns both the surrogate parameters and inner-approximation objective from utilization-side cost feedback. This directs the approximate surrogate's representation capacity toward HVAC flexibility regions identified as cost-relevant by the downstream objective.
	
\subsubsection{Modeling}
We formulate a risk- and ambiguity-aware flexibility utilization task using historical or forecast price trajectories available to HVAC aggregators. This task adopts a distributionally robust conditional value-at-risk (DR-CVaR) objective to capture price-distribution ambiguity and tail-risk preferences. Embedding this task into the proposed learning pipeline enables the learned aggregate set to reduce out-of-sample tail costs under unseen price scenarios.
	
\subsubsection{Solution}
We design a tailored solution procedure for tractable calculation. The DR-CVaR utilization layer is reformulated via duality into a convex second-order cone program (SOCP), while a score-function gradient estimator with randomized perturbations is used to estimate cost-informed gradients efficiently. The separability of individual flexibility sets across buildings further enables parallel inner approximation and gradient evaluation for efficient learning.

\subsection{Paper Organization and Notations}
The remainder of this paper is organized as follows. Section \ref{sec:problem} formulates the individual and aggregate HVAC flexibility models and motivates cost-informed aggregation. Section \ref{sec:framework} presents the proposed cost-informed learning framework, including the parameterized
surrogate, inner approximation, utilization task, and solution workflow. Section \ref{sec:case} reports case studies, and Section \ref{sec:conclu} concludes the paper.

We use the following notation. The set of real numbers is denoted by $\mathbb{R}$. For any positive integer $n$, $\mathbf{I}_n$, $\mathbf{1}_n$, and $\mathbf{0}_n$ denote the $n \times n$ identity matrix, the $n \times 1$ all-ones vector, and the $n \times 1$ all-zeros vector, respectively. For a vector $\boldsymbol{\gamma}$ and a matrix $\boldsymbol{\Gamma}$ with compatible dimensions, the affine image of a set $\mathcal{X}$ is defined as $\boldsymbol{\gamma }+\boldsymbol{\Gamma }\mathcal{X} :=\left\{ \boldsymbol{\gamma }+\boldsymbol{\Gamma x}\mid \mathbf{x}\in \mathcal{X} \right\} $. For a scalar $x \in \mathbb{R}$, $(x)_+:=\max \left\{ 0,x \right\} $ denotes the positive part of $x$. The notation $\left\| \cdot \right\| $ denotes the Euclidean norm.

%% file: II_Problem.tex
In this section, we first model individual and aggregate HVAC flexibility sets and then motivate cost-informed aggregate flexibility identification.

\subsection{Individual Building HVAC Flexibility Modeling}\label{subsec:problem_individual}
We consider a fleet of $B$ buildings indexed by $b\in \mathcal{B}:=\left\{1,\ldots,B\right\}$. Time is discretized into $T$ equal-length periods of duration $\delta$, indexed by $t\in \mathcal{T}:=\left\{1,\ldots,T\right\}$. The thermal dynamics of each building are modeled by a second-order finite-difference formulation \cite{zheng2022consensus,liu2025leveraging}:
\beq \label{eq:individual_dynamics_I}
\eqfitu{
c_{b}^{\mathrm{I}}\frac{x_{b,t}^{\mathrm{I}}-x_{b,t-1}^{\mathrm{I}}}{\delta}=\frac{x_{t}^{\mathrm{O}}-x_{b,t}^{\mathrm{I}}}{r_{b}^{\mathrm{I}}}+\frac{x_{b,t}^{\mathrm{M}}-x_{b,t}^{\mathrm{I}}}{r_{b}^{\mathrm{M}}}-\left(1-\alpha_b\right)\eta_bu_{b,t},
}
\eeq
\beq \label{eq:individual_dynamics_M}
\eqfitu{
c_{b}^{\mathrm{M}}\frac{x_{b,t}^{\mathrm{M}}-x_{b,t-1}^{\mathrm{M}}}{\delta}=\frac{x_{b,t}^{\mathrm{I}}-x_{b,t}^{\mathrm{M}}}{r_{b}^{\mathrm{M}}}-\alpha_b\eta_b u_{b,t},
}
\eeq
where the decision variable $u_{b,t}$ denotes HVAC power, $x_{b,t}^{\mathrm{M}}$ is the building mass temperature, and $x_{b,t}^{\mathrm{I}}$ is the indoor air temperature. The parameters $r_{b}^{\mathrm{I}}$ and $r_{b}^{\mathrm{M}}$ are thermal resistances, $c_{b}^{\mathrm{I}}$ and $c_{b}^{\mathrm{M}}$ are thermal capacitances, $\eta_b$ is the coefficient of performance, $x_t^{\mathrm{O}}$ is the outdoor temperature, and $\alpha_b$ is the fractional coefficient of HVAC power.

To ensure users' thermal comfort and HVAC operating limits, we impose the indoor temperature and power constraints
\beq \label{eq:individual_limit_x}
\eqfitu{
x_{b}^{\mathrm{set}}-x_{b}^{\mathrm{tol}}\le x_{b,t}^{\mathrm{I}}\le x_{b}^{\mathrm{set}}+x_{b}^{\mathrm{tol}},
}
\eeq
\beq \label{eq:individual_limit_u}
\eqfitu{
\underline{u}_b\le u_{b,t}\le \overline{u}_b,
}
\eeq
where $x_{b}^{\mathrm{set}}$ is the temperature setpoint, $x_{b}^{\mathrm{tol}}$ is the comfort tolerance band, and $\underline{u}_{b}$ and $\overline{u}_{b}$ are the minimum and maximum HVAC operating powers.

By combining \eqref{eq:individual_dynamics_I}-\eqref{eq:individual_limit_u} and eliminating the temperature states $x_{b,t}^{\mathrm{M}}$ and $x_{b,t}^{\mathrm{I}}$, the feasible region of each HVAC system can be encoded as an individual flexibility set in the power subspace:
\beq \label{eq:individual_set}
\eqfitu{
\mathcal{U}_b:=\left\{ \mathbf{u}_b\, \middle| \,\mathbf{H}_b\mathbf{u}_b\le \mathbf{h}_b \right\} \subset \mathbb{R}^T,
}
\eeq
where $\mathbf{u}_b:=\left[u_{b,t}\right]_{t\in\mathcal{T}}$, $\mathbf{H}_b$ is the coefficient matrix, and $\mathbf{h}_b$ is the coefficient vector. The detailed elimination process can be found in \cite[Appendix B]{liu2025leveraging} and is omitted here for brevity.

\subsection{Flexibility Aggregation}\label{subsec:problem_aggregation}
For an aggregator controlling $B$ buildings, the aggregate flexibility set describes all feasible aggregate power profiles and is represented by the Minkowski sum of the individual HVAC flexibility sets:
\beq \label{eq:aggregate_exact_set}
\eqfitu{
\mathcal{U}:=\bigoplus_{b\in\mathcal{B}}\mathcal{U}_b=\left\{ \mathbf{u}\, \middle| \,\mathbf{u}=\sum_{b\in\mathcal{B}}\mathbf{u}_b,\mathbf{u}_b\in\mathcal{U}_b \right\} \subset \mathbb{R}^T,
}
\eeq
where $\bigoplus$ denotes the Minkowski sum.

In general, obtaining an explicit representation of the Minkowski sum is computationally intractable for high-dimensional multi-period sets \cite{tiwary2008hardness}. Therefore, we seek an inner-approximated aggregate set $\widetilde{\mathcal{U}}\subseteq\mathcal{U}$ that ensures disaggregation feasibility and computational efficiency.

\subsection{Flexibility Utilization}\label{subsec:problem_utilization}
Given an identified inner approximation $\widetilde{\mathcal{U}}$, the submitted set is dispatched according to the following price-taking proxy, which evaluates the economic value of operator dispatch under an exogenous realized electricity price vector $\boldsymbol{\lambda}\in\mathbb{R}^T$ \cite{al2024efficient}. The realized utilization value is

\beq \label{eq:utilize_realizedprice}
\eqfitu{
V\left( \boldsymbol{\lambda};\widetilde{\mathcal{U}} \right) :=\min_{\mathbf{u}\in \widetilde{\mathcal{U} }} \,\,C\left( \mathbf{u};\boldsymbol{\lambda } \right) ,\quad C\left( \mathbf{u};\boldsymbol{\lambda } \right) :=\delta \boldsymbol{\lambda }^{\top}\mathbf{u}.
}
\eeq

From an economic perspective, flexibility aggregation for HVAC aggregators should preserve the flexibility that is valuable for the downstream utilization problem \eqref{eq:utilize_realizedprice}, rather than only construct a volume-oriented surrogate set. This distinction is important for inner approximation because some flexibility loss is inevitable. Moreover, not all regions of the aggregate flexibility set are equally cost-relevant: different price trajectories activate different dispatch points, and surrogate sets with the same approximation volume may lead to substantially different operating costs. This motivates us to learn $\widetilde{\mathcal{U}}$ using utilization-side cost feedback, so that its representation capacity is concentrated on cost-relevant HVAC flexibility regions.

Realizing this goal raises three issues: how to incorporate downstream cost feedback into inner approximation, how to formulate the utilization task under price uncertainty so that $\widetilde{\mathcal{U}}$ generalizes to unseen price scenarios, and how to solve the learning pipeline efficiently. To address these issues, we next present a cost-informed learning framework.

%% file: III_Framework.tex
In this section, we present a cost-informed learning framework for characterizing aggregate HVAC flexibility. By coupling aggregate flexibility characterization with downstream utilization costs, the framework jointly learns surrogate parameters and inner-approximation objective. We first introduce the parameterized surrogate and its inner-approximation model, then formulate the risk- and ambiguity-aware utilization problem, and finally present the closed-loop learning formulation, solution procedure, and implementation workflow.

\subsection{Parameterized Storage Surrogate}\label{subsec:framework_surrogate}
To structure the inner-approximation set $\widetilde{\mathcal{U}}\subseteq\mathcal{U}$, we introduce a parameterized storage surrogate $\mathcal{U}_0\subset \mathbb{R}^{2T}$:
\beq \label{eq:surrogate_storage_model}
\eqfitu{
\mathcal{U} _0:=\left\{ \rev{\left[\left(\mathbf{u}^{\mathrm{C}}\right)^\top,\left(\mathbf{u}^{\mathrm{D}}\right)^\top\right]^\top} \, \middle| \,\begin{array}{c}
	\underline{x}\le \kappa ^tx_0+\sum_{\iota =1}^t{\kappa ^{t-\iota}\left[ \delta \left( u_{\iota}^{\mathrm{C}}\eta ^{\mathrm{C}}-u_{\iota}^{\mathrm{D}}/\eta ^{\mathrm{D}} \right) \right]}\le \overline{x},\\
	0\le u_{t}^{\mathrm{C}}\le \overline{u}^{\mathrm{C}},\;0\le u_{t}^{\mathrm{D}}\le \overline{u}^{\mathrm{D}},\forall t\in \mathcal{T}\\
\end{array} \right\} ,
}
\eeq
The first constraint in \eqref{eq:surrogate_storage_model} enforces the storage-state bounds induced by the cumulative charging/discharging trajectory, while the second and third constraints impose the charging and discharging power limits, respectively.
Here, decision variables $\mathbf{u}^{\mathrm{C}}:=\left[ u_{t}^{\mathrm{C}} \right] _{t\in \mathcal{T}}$ and $\mathbf{u}^{\mathrm{D}}:=\left[ u_{t}^{\mathrm{D}} \right] _{t\in \mathcal{T}}$ collect the charging and discharging power across all periods. The model parameters $\underline{x}$ and $\overline{x}$ are the energy lower and upper bounds, $\kappa$ is the state-retention coefficient, $x_0$ is the initial state before the first dispatch period, $\eta^{\mathrm{C}}$ and $\eta^{\mathrm{D}}$ are the charging and discharging efficiencies, and $\overline{u}^{\mathrm{C}}$ and $\overline{u}^{\mathrm{D}}$ are the charging and discharging power capacities. The storage surrogate is parameterized by
\beq \label{eq:surrogate_storage_parameter}
\eqfitu{
\boldsymbol{\theta } :=\left[ x_0,\underline{x},\overline{x},\kappa ,\eta ^{\mathrm{C}},\eta ^{\mathrm{D}},\overline{u}^{\mathrm{C}},\overline{u}^{\mathrm{D}} \right] ^{\top}\in \mathbb{R} ^8 .
}
\eeq
Then, we can rewrite the storage surrogate into a compact parameterized form
\beq \label{eq:surrogate_storage_model_compact}
\eqfitu{
\mathcal{U} _0\left( \boldsymbol{\theta } \right) =\left\{ \rev{\left[\left(\mathbf{u}^{\mathrm{C}}\right)^\top,\left(\mathbf{u}^{\mathrm{D}}\right)^\top\right]^\top} \, \middle| \,\mathbf{H}_0\left( \boldsymbol{\theta } \right) \rev{\left[\left(\mathbf{u}^{\mathrm{C}}\right)^\top,\left(\mathbf{u}^{\mathrm{D}}\right)^\top\right]^\top} \,\le \mathbf{h}_0\left( \boldsymbol{\theta } \right) \right\} .
}
\eeq
where $\mathbf{H}_0\left( \boldsymbol{\theta } \right) $ and $\mathbf{h}_0\left( \boldsymbol{\theta } \right) $ are the coefficient matrix and vector derived from \eqref{eq:surrogate_storage_model}, respectively.

The storage-form surrogate is adopted for two reasons. First, storage-form flexibility is physically interpretable and compatible with many market and system operation requirements \cite{CAISO_Sec30_2024}, enabling the aggregate set to be readily usable. Second, HVAC flexibility can be interpreted as a form of thermal energy storage \cite{song2017thermal}: increasing cooling power charges thermal flexibility, whereas decreasing cooling power discharges it. Therefore, a storage-form surrogate with properly selected parameters can provide an expressive approximation of aggregate HVAC flexibility. However, due to the structural mismatch between HVAC thermal dynamics and storage parameters, selecting these parameters heuristically is challenging. We therefore treat $\boldsymbol{\theta}$ as learnable rather than fixed, and update it using downstream cost feedback.

\subsection{Parameterized Inner Approximation}\label{subsec:framework_innerapp}
We next construct  $\widetilde{\mathcal{U}}\left( \boldsymbol{\theta } \right)$ as an affine image of the parameterized storage surrogate. This affine form follows existing inner-approximation methods \cite{al2024efficient,liu2026coupling} and provides a tractable representation for aggregating many HVAC flexibility sets. Specifically, for individual flexibility set $\mathcal{U}_b$, we have
\beq \label{eq:innerapp_individual_affine}
\eqfitu{
\widetilde{\mathcal{U} }_b\left( \boldsymbol{\theta } \right) :=\boldsymbol{\gamma }_b+\boldsymbol{\Phi}\boldsymbol{\Gamma }_b\mathcal{U} _0\left( \boldsymbol{\theta } \right) \subseteq \mathcal{U} _b,
}
\eeq
where $\boldsymbol{\gamma }_b\in \mathbb{R} ^T$ denotes the translation vector, $\boldsymbol{\Gamma }_b\in \mathbb{R} ^{2T\times 2T}$ denotes a diagonal linear transformation matrix, and $\boldsymbol{\Phi }:=\left[ \mathbf{I}_T,-\mathbf{I}_T \right] \in \mathbb{R} ^{T\times 2T}$ denotes the projection matrix that maps bidirectional charging and discharging power to net power.

Then, the inner-approximated aggregate set can be efficiently calculated by \cite{al2024efficient}:
\beq \label{eq:innerapp_aggregate_affine}
\eqfitu{
\widetilde{\mathcal{U} }:=\left( \sum_{b\in \mathcal{B}}{\boldsymbol{\gamma }_b} \right) +\boldsymbol{\Phi }\left( \sum_{b\in \mathcal{B}}{\boldsymbol{\Gamma }_b} \right) \mathcal{U} _0\left( \boldsymbol{\theta } \right)
\subseteq\mathcal{U}.
}
\eeq
\begin{remark}[Diagonal linear transformation]\label{remark:diag_trans}
Restricting $\boldsymbol{\Gamma}_b$ to be diagonal preserves the storage-form interpretability of the resulting inner approximation while improving computational and learning efficiency. Structurally, a diagonal $\boldsymbol{\Gamma}_b$ independently rescales the charging and discharging coordinates of $\mathcal{U}_0(\boldsymbol{\theta})$ at each time period. This time-varying scaling changes how the storage surrogate is embedded into each building's flexibility set, but it does not mix time periods or charging/discharging directions. Therefore, the transformed set remains interpretable as a baseline profile plus storage-form flexibility with time-varying charging/discharging scaling. Computationally, the diagonal restriction uses only $2T$ linear-map variables instead of $4T^2$ entries of a full $2T\times 2T$ transformation, reducing the size of the building-level inner-approximation problem \eqref{eq:innerapp_problem_linear} and keeping the cost-informed weight vector $\mathbf{w}$ low-dimensional for efficient learning.
\end{remark}

To determine the suitable value of affine parameters $\boldsymbol{\Gamma }_b$ and $\boldsymbol{\gamma }_b$ in \eqref{eq:innerapp_individual_affine}, we formulate a parameterized inner approximation optimization problem for each building:
\beq \label{eq:innerapp_problem_original}
\eqfitu{
\begin{aligned}
	\max_{\boldsymbol{\gamma }_b,\boldsymbol{\Gamma }_b} \quad &\mathbf{w}^{\top}\rev{\left[\left(\mathrm{diag}\left( \boldsymbol{\Gamma }_b \right)\right)^\top,\left(\boldsymbol{\gamma }_b\right)^\top\right]^\top}\\
	\mathrm{s}.\mathrm{t}.\quad &\widetilde{\mathcal{U} }_b\left( \boldsymbol{\theta } \right) =\boldsymbol{\gamma }_b+\boldsymbol{\Phi}\boldsymbol{\Gamma }_b\mathcal{U} _0\left( \boldsymbol{\theta } \right) \subseteq \mathcal{U} _b,\\
\end{aligned}
}
\eeq
where the objective is a weighted surrogate-size objective, $\mathbf{w}\in \mathbb{R} ^{3T}$ is the weight vector, and $\mathrm{diag}\left( \boldsymbol{\Gamma }_b \right) $ collects the diagonal entries of $\boldsymbol{\Gamma }_b$. For compact notation, the first-line objective in \eqref{eq:innerapp_problem_original} is denoted by $F(\widetilde{\mathcal{U}}_b(\boldsymbol{\theta});\mathbf{w})$ in Fig.~\ref{fig:Overview}. Further using \cite[Lemma 1]{al2024efficient}, \eqref{eq:innerapp_problem_original} can be equivalently expressed as a parameterized linear program (LP):
\beq \label{eq:innerapp_problem_linear}
\eqfitu{
\begin{aligned}
	\max_{\boldsymbol{\gamma }_b,\boldsymbol{\Gamma }_b,\boldsymbol{\Lambda }_b} \quad &\mathbf{w}^{\top}\rev{\left[\left(\mathrm{diag}\left( \boldsymbol{\Gamma }_b \right)\right)^\top,\left(\boldsymbol{\gamma }_b\right)^\top\right]^\top}\\
	\mathrm{s}.\mathrm{t}.\quad &\boldsymbol{\Lambda }_b\ge 0,\boldsymbol{\Lambda }_b\mathbf{H}_0\left( \boldsymbol{\theta } \right) =\mathbf{H}_b\boldsymbol{\Phi}\boldsymbol{\Gamma }_b,\\
	&\boldsymbol{\Lambda }_b\mathbf{h}_0\left( \boldsymbol{\theta } \right) \le \mathbf{h}_b-\mathbf{H}_b\boldsymbol{\gamma }_b.\\
\end{aligned}
}
\eeq

\begin{remark}[Comparison to existing volume-oriented methods]\label{remark:compare_bitar}
If $\boldsymbol{\theta}$ is fixed and $\mathbf{w}=\rev{\left[\left(\mathbf{1}_{2T}\right)^\top,\left(\mathbf{0}_T\right)^\top\right]^\top}$, the proposed parameterized formulation reduces to the existing volume-oriented inner-approximation method in \cite{al2024efficient,liu2026coupling}. 
While the recent method in \cite{dong2026federated} optimizes the surrogate parameters to improve the inner approximation, it is still driven by a volume-oriented criterion.
Direct volume evaluation can be computationally demanding
for high-dimensional projected sets, such as the $T=24$
aggregate set considered in Section~\ref{sec:case}.
Although \cite{dong2026federated} develops a tailored
volume-evaluation procedure that exploits its specific base-set
structure, this procedure does not directly apply to the
projected aggregate set in
\eqref{eq:innerapp_aggregate_affine}.
More importantly, these geometric objectives do not explicitly
account for downstream utilization costs, and a larger-volume
approximation does not necessarily yield a lower dispatch cost.
In contrast, the proposed cost-informed framework avoids
explicit volume evaluation and jointly learns
$\boldsymbol{\theta}$ and $\mathbf{w}$ from downstream cost
feedback, enabling the aggregate surrogate to allocate its
limited representation capacity to cost-relevant
flexibility regions.
\end{remark}

\subsection{Risk- and Ambiguity-Aware Flexibility-Utilization Task}\label{subsec:framework_task}
Given surrogate parameters $\boldsymbol{\theta}$ and objective weight $\mathbf{w}$, solving \eqref{eq:innerapp_problem_linear} for all buildings and aggregating the resulting inner approximations through \eqref{eq:innerapp_aggregate_affine} gives an approximate  aggregate set $\widetilde{\mathcal{U}}$. In the sequel, we write this set as $\widetilde{\mathcal{U}}\left( \boldsymbol{\theta},\mathbf{w} \right)$ when emphasizing its dependence on the physical storage parameters and objective weights. If the realized price vector $\boldsymbol{\lambda}$ were known before aggregation, the downstream utilization value could be evaluated directly by \eqref{eq:utilize_realizedprice}. In practice, however, flexibility aggregation is made before prices are realized. Therefore, the aggregator must rely on historical or forecast price trajectories to evaluate the distribution of realized utilization values under price uncertainty.

We consider a set of $S$ known price trajectories
$\{\widetilde{\boldsymbol{\lambda}}_s\}_{s\in\mathcal{S}}$,
where $\mathcal{S}:=\{1,\ldots,S\}$, obtained from historical data or forecasts available to the aggregator. These trajectories induce the finitely supported nominal distribution $P_0$, and we write $\widetilde{\boldsymbol{\lambda}}\sim P_0$ for the nominal empirical price trajectory. Since the true distribution $P$ of the realized price vector $\boldsymbol{\lambda}$ is unknown, we assume that $P$ lies in an ambiguity set $\mathcal{P}$ centered at $P_0$, defined via the 2-Wasserstein distance:
\beq \label{eq:DR_ambiguityset}
\eqfitu{
\mathcal{P} :=\left\{ P\mid W_2(P,P_0)\le r \right\} ,
}
\eeq
\beq \label{eq:DR_Wdistance}
\eqfitu{
W_2(P,P_0)^2:=\min_{\pi \in \Pi (P,P_0)}
\int_{\mathbb{R} ^T\times \mathbb{R} ^T}
{\left\| \boldsymbol{\lambda }-\widetilde{\boldsymbol{\lambda}} \right\|_2^2\,
\pi \left(\mathrm{d}\boldsymbol{\lambda},\mathrm{d}\widetilde{\boldsymbol{\lambda}}\right)},
}
\eeq
where $\Pi (P,P_0)$ denotes the set of all joint distributions with marginal distributions $P$ and $P_0$. The radius $r$ reflects the aggregator's confidence in the nominal distribution: a larger $r$ allows more distributional deviation, while $r=0$ recovers the nominal distribution $P_0$.

Having specified the distributional ambiguity model, we adopt conditional value-at-risk (CVaR) to capture tail-risk preference. At tail probability level $\alpha \in \left( 0,1 \right]$, for any fixed aggregate set $\widetilde{\mathcal{U}}$, the CVaR of the random utilization value $V\left( \boldsymbol{\lambda};\widetilde{\mathcal{U}} \right)$ under distribution $P$ is defined as
\beq \label{eq:CVAR_def}
\eqfitu{
\mathrm{CVaR}_{P}^{\alpha}\left( V\left( \boldsymbol{\lambda};\widetilde{\mathcal{U}} \right) \right)
:=\min_{\tau \in \mathbb{R}} \left\{ \tau +\frac{1}{\alpha}\mathbb{E} _P\left[ \left( V\left( \boldsymbol{\lambda};\widetilde{\mathcal{U}} \right)-\tau \right) _+ \right] \right\} .
}
\eeq

Accordingly, for the parameterized aggregate set $\widetilde{\mathcal{U}}\left( \boldsymbol{\theta},\mathbf{w} \right)$, the risk- and ambiguity-aware utilization objective is the worst-case CVaR over all distributions in $\mathcal{P}$:
\beq \label{eq:DRCVAR_problem_original}
\eqfitu{
J\left( \boldsymbol{\theta},\mathbf{w} \right) :=\max_{P\in \mathcal{P}} \,\,\mathrm{CVaR}_{P}^{\alpha}\left( V\left( \boldsymbol{\lambda};\widetilde{\mathcal{U}}\left( \boldsymbol{\theta},\mathbf{w} \right) \right) \right) .
}
\eeq

This DR-CVaR formulation includes several useful special cases. When $\alpha =1$, CVaR coincides with expected cost, and \eqref{eq:DRCVAR_problem_original} reduces to distributionally robust expected-value evaluation of price-adaptive dispatch. As $\alpha \downarrow 0$, CVaR approaches the worst-case realized cost, so the problem approaches robust cost evaluation. When $r=0$, the ambiguity set collapses to $P_0$, and \eqref{eq:DRCVAR_problem_original} becomes a nominal CVaR evaluation problem. Therefore, the proposed task allows the aggregator to tune both distributional ambiguity through $r$ and tail-risk preference through $\alpha$. Embedding this task into the learning pipeline exposes aggregate-set learning to distributional deviations and high-cost tail events represented by the DR-CVaR objective.

\subsection{Closed-Loop Formulation with Cost-Informed Learning}\label{subsec:framework_closedloop}
With the preceding components in place, we now formulate the closed-loop learning problem as
\beq \label{eq:closedloop_original}
\eqfitu{
\begin{aligned}
	\min_{\boldsymbol{\theta},\mathbf{w}} \quad &J\left( \boldsymbol{\theta},\mathbf{w} \right)\\
	\mathrm{s}.\mathrm{t}.\quad &\widetilde{\mathcal{U}}\left( \boldsymbol{\theta},\mathbf{w} \right)=\left( \sum_{b\in \mathcal{B}}{\boldsymbol{\gamma }_b^{\star}} \right) +\boldsymbol{\Phi }\left( \sum_{b\in \mathcal{B}}{\boldsymbol{\Gamma }_b^{\star}} \right) \mathcal{U}_0\left( \boldsymbol{\theta} \right),\\
	&\left( \boldsymbol{\gamma }_b^{\star},\boldsymbol{\Gamma }_b^{\star},\boldsymbol{\Lambda }_b^{\star} \right) \in \Omega_b\left( \boldsymbol{\theta},\mathbf{w} \right),\quad \forall b\in \mathcal{B}.
\end{aligned}
}
\eeq
where $\Omega_b\left( \boldsymbol{\theta},\mathbf{w} \right)$ denotes the optimal-solution set of the building-level inner-approximation problem \eqref{eq:innerapp_problem_linear}, and the starred variables denote selected lower-level optimal solutions. Problem \eqref{eq:closedloop_original} makes the cost-informed coupling explicit: $\boldsymbol{\theta}$ defines the storage surrogate, $\mathbf{w}$ selects the inner-approximation directions through the weighted objective, and the resulting affine mappings determine the aggregate set $\widetilde{\mathcal{U} }\left( \boldsymbol{\theta },\mathbf{w} \right) $ used by the upper DR-CVaR utilization layer. Therefore, the utilization cost can provide cost-informed feedback for learning both the surrogate parameters $\boldsymbol{\theta}$ and the objective weight $\mathbf{w}$.

We then solve \eqref{eq:closedloop_original} by gradient descent, as illustrated in Fig. \ref{fig:Overview} (bottom). 
In the forward pass at each iteration, the current $\boldsymbol{\theta}$ and $\mathbf{w}$ are used to solve the building-level LPs \eqref{eq:innerapp_problem_linear}, construct $\widetilde{\mathcal{U}}(\boldsymbol{\theta},\mathbf{w})$ via \eqref{eq:innerapp_aggregate_affine}, and evaluate the DR-CVaR utilization cost. 
In the backward pass, the gradients of this cost with respect to $\boldsymbol{\theta}$ and $\mathbf{w}$ are then estimated and used to update the parameters toward lower downstream cost. 
This process is repeated until convergence.

However, two computational challenges remain. First, the DR-CVaR formulation in \eqref{eq:DRCVAR_problem_original} involves an infinite-dimensional worst-case distributional evaluation. Second, differentiating through the large-scale parameterized LPs in \eqref{eq:innerapp_problem_linear} can be computationally expensive and numerically sensitive to degeneracy. These challenges motivate the tailored solution procedure developed next.
\subsection{Tailored and Computationally Efficient Solution Procedure}\label{subsec:framework_solution}
\subsubsection{Convex Reformulation in Forward Pass}
We first reformulate the DR-CVaR evaluation in \eqref{eq:DRCVAR_problem_original} as a tractable second-order cone program (SOCP). For fixed $\boldsymbol{\theta}$ and $\mathbf{w}$, the building-level LPs in \eqref{eq:innerapp_problem_linear} define the aggregate set $\widetilde{\mathcal{U}}(\boldsymbol{\theta},\mathbf{w})$. The remaining task is to evaluate the worst-case CVaR over the Wasserstein ambiguity set. By applying Wasserstein-CVaR duality, we first obtain the following dual reformulation.

\begin{lemma}[Dual Reformulation]\label{lemma:DRCVAR_dual}
	For any given $\boldsymbol{\theta}$, $\mathbf{w}$, and $r>0$, the worst-case CVaR value in \eqref{eq:DRCVAR_problem_original} can be written as
	\beq \label{eq:DRCVaR_dual_formulation} 
	\eqfitu{
		J\left( \boldsymbol{\theta},\mathbf{w} \right)=\min_{\tau \in \mathbb{R} ,\,\mu \ge 0} \left\{ \tau +\frac{1}{\alpha}\left( \mu r^2+\mathbb{E} _{P_0}\left[ \phi \left(\tau ,\mu ,\widetilde{\boldsymbol{\lambda}}\right) \right] \right) \right\} ,
	}
	\eeq	
	where $\tau$ is the CVaR threshold variable, $\mu$ is the nonnegative dual multiplier associated with the Wasserstein ambiguity-radius constraint, and
	\beq \label{eq:DRCVaR_dual_definition} 
	\eqfitu{
		\phi \left(\tau ,\mu ,\widetilde{\boldsymbol{\lambda}}\right):=\max_{\boldsymbol{\lambda }\in \mathbb{R} ^T} \left\{ \left( V\left( \boldsymbol{\lambda};\widetilde{\mathcal{U}}\left( \boldsymbol{\theta},\mathbf{w} \right) \right) -\tau \right) _+-\mu \left\| \boldsymbol{\lambda }-\widetilde{\boldsymbol{\lambda}} \right\| _{2}^{2} \right\} .
	}
	\eeq	
\end{lemma}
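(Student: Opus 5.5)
The plan is to swap the order of the worst-case maximization over $P\in\mathcal{P}$ and the CVaR minimization over $\tau$. After the swap, the inner worst-case expectation is handled by strong duality for Wasserstein distributionally robust optimization.

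\textbf{Step 1 (structure of $V$).} Write $V(\boldsymbol{\lambda}) := V(\boldsymbol{\lambda};\widetilde{\mathcal{U}}(\boldsymbol{\theta},\mathbf{w}))$. This is a pointwise minimum of the linear functions $\delta\boldsymbol{\lambda}^\top\mathbf{u}$ over the nonempty compact polytope $\widetilde{\mathcal{U}}$. Compactness holds because $\mathcal{U}_0(\boldsymbol{\theta})$ is bounded by the power limits. Hence $V$ is finite, concave and Lipschitz, with constant $\delta\max_{\mathbf{u}\in\widetilde{\mathcal{U}}}\|\mathbf{u}\|$. In particular it has linear growth, so $\ell(\boldsymbol{\lambda},\tau):=(V(\boldsymbol{\lambda})-\tau)_+$ is upper semicontinuous and grows at most linearly in $\boldsymbol{\lambda}$. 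It is therefore integrable under every $P$ with a finite second moment, and all $P\in\mathcal{P}$ have finite second moments because $P_0$ is finitely supported.

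\textbf{Step 2 (minimax interchange).} By \eqref{eq:CVAR_def}, $J=\sup_{P\in\mathcal{P}}\min_{\tau}\{\tau+\alpha^{-1}\mathbb{E}_P[\ell(\boldsymbol{\lambda},\tau)]\}$. The objective is convex in $\tau$ and linear, hence concave, in $P$, and $\mathcal{P}$ is convex. I would invoke Sion's minimax theorem. The space of $\tau$ is not compact, so I would first restrict $\tau$ to a compact interval $[\underline{\tau},\overline{\tau}]$ without loss of optimality. The optimal $\tau$ is a quantile of $V(\boldsymbol{\lambda})$ under $P$. Because $V$ is Lipschitz and every $P\in\mathcal{P}$ is within $W_2$-distance $r$ of $P_0$, these quantiles are uniformly bounded; for example, Chebyshev-type bounds give this from the uniformly bounded second moments. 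Alternatively, $\mathcal{P}$ is weakly compact for $W_2$ balls around a finitely supported center, when paired with the topology in which $P\mapsto\mathbb{E}_P[\ell]$ is upper semicontinuous, and Sion can be applied with compactness on the $P$ side. I expect this interchange to be the main technical obstacle. I would first try the uniform quantile bound, since it is the most elementary route.

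\textbf{Step 3 (Wasserstein duality).} For fixed $\tau$, apply the strong duality theorem for Wasserstein DRO (Gao--Kleywegt, Blanchet--Murthy). It holds for upper semicontinuous losses with growth at most quadratic under the cost $\|\cdot\|_2^2$, and $r>0$ ensures the Slater-type condition. This gives
\[
\sup_{P\in\mathcal{P}}\mathbb{E}_P[\ell(\boldsymbol{\lambda},\tau)]=\min_{\mu\ge 0}\Big\{\mu r^2+\mathbb{E}_{P_0}\big[\textstyle\sup_{\boldsymbol{\lambda}}\{\ell(\boldsymbol{\lambda},\tau)-\mu\|\boldsymbol{\lambda}-\widetilde{\boldsymbol{\lambda}}\|_2^2\}\big]\Big\}.
\]
The inner supremum is exactly $\phi(\tau,\mu,\widetilde{\boldsymbol{\lambda}})$. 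For $\mu>0$ it is attained, since the linear growth of $\ell$ is dominated by the quadratic penalty, so "max" in \eqref{eq:DRCVaR_dual_definition} is justified. At $\mu=0$ we have $\phi=+\infty$ unless $V$ is bounded above, which is harmless in the minimization.

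\textbf{Step 4 (assemble).} Substituting Step 3 into the interchanged expression from Step 2 and merging the two minimizations yields \eqref{eq:DRCVaR_dual_formulation}. The minimum over $(\tau,\mu)$ is attained, or at least the infimum equals $J$, by the joint convexity of the dual objective in $(\tau,\mu)$; convexity holds because $\phi$ is a supremum of functions affine in $(\tau,\mu)$ composed with the convex map $(\cdot)_+$.
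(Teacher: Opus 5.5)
Your proposal is correct and follows essentially the paper's route: the paper simply invokes the generic Wasserstein duality result \cite[Theorem 5.22]{kuhn2025distributionally}, and your two steps, interchanging $\sup_{P\in\mathcal{P}}$ with $\min_{\tau}$ and then applying optimal-transport strong duality to the fixed-$\tau$ loss $(V-\tau)_+$, spell out the ingredients that citation relies on. One caveat: your preferred uniform-quantile route to a compact $\tau$-interval fails at $\alpha=1$, which the paper allows, because there the minimizing set is $\{\tau\le\operatorname{ess\,inf}_P V\}$, which is not uniformly bounded over the $W_2$ ball and can even be empty; in that case use your second route, which is rigorous because the uniformly bounded second moments on $\mathcal{P}$ make it weakly compact and make $P\mapsto\mathbb{E}_P[(V-\tau)_+]$ weakly continuous on it.
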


Lemma~\ref{lemma:DRCVAR_dual} directly specializes the generic Wasserstein strong-duality result in \cite[Theorem 5.22]{kuhn2025distributionally}. We then combine this lemma with the finite-support reformulation developed in \cite[Theorem 3]{taha2026wasserstein}, and exploit the price-adaptive structure of $V$ to derive a finite-dimensional conic reformulation.

\begin{proposition}[SOCP Reformulation]\label{prop:DRCVAR_problem_SOCP}
Given $\boldsymbol{\theta}$ and $\mathbf{w}$ such that $\widetilde{\mathcal{U}}(\boldsymbol{\theta},\mathbf{w})$ is nonempty, compact, and convex, suppose that $r>0$ and that the nominal distribution $P_0$ is uniform over $\{\widetilde{\boldsymbol{\lambda}}_s\}_{s\in\mathcal{S}}$. Then, the forward DR-CVaR utilization layer can be equivalently reformulated as the following SOCP:
\beq \label{eq:DRCVAR_problem_SOCP} 
\eqfitu{
\begin{aligned}
	\min_{\tau ,\,\mu ,\,\{\rho _s,\zeta_s,\mathbf{u}_s\}_{s\in \mathcal{S}}} \quad &\tau +\frac{1}{\alpha}\left( \mu r^2+\frac{1}{S}\sum_{s\in \mathcal{S}}{\rho _s} \right)\\
	\mathrm{s}.\mathrm{t}.\quad
	&\mathbf{u}_s\in \widetilde{\mathcal{U} }(\boldsymbol{\theta },\mathbf{w}),\forall s\in \mathcal{S} ,\\
	&\rho _s\ge \delta \widetilde{\boldsymbol{\lambda }}_{s}^{\top}\mathbf{u}_s+\delta ^2\zeta_s-\tau ,\forall s\in \mathcal{S} ,\\
	&\left\| \mathbf{u}_s \right\| _2^2\le 4\mu \zeta_s,\forall s\in \mathcal{S} ,\\
	&\rho _s\ge 0,\zeta_s\ge 0,\forall s\in \mathcal{S} ,\\
	&\mu \ge 0.\\
\end{aligned}
}
\eeq
\end{proposition}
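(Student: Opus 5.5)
Starting from Lemma~\ref{lemma:DRCVAR_dual}, I will evaluate the inner supremum $\phi(\tau,\mu,\widetilde{\boldsymbol{\lambda}}_s)$ in closed form for each scenario. Since $P_0$ is uniform, the expectation in \eqref{eq:DRCVaR_dual_formulation} becomes $\frac{1}{S}\sum_s \phi(\tau,\mu,\widetilde{\boldsymbol{\lambda}}_s)$. I will then show that each $\phi$ equals the optimal value of a small conic program in $(\rho_s,\zeta_s,\mathbf{u}_s)$. Merging the outer minimization over $(\tau,\mu)$ with these inner minimizations yields \eqref{eq:DRCVAR_problem_SOCP}.

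\textbf{Step 1 (remove the positive part).} Write $(a)_+=\max\{0,a\}$ and exchange the two maxima:
\[
\phi=\max\Bigl\{\sup_{\boldsymbol{\lambda}}\bigl(-\mu\|\boldsymbol{\lambda}-\widetilde{\boldsymbol{\lambda}}_s\|^2\bigr),\ \sup_{\boldsymbol{\lambda}}\bigl(V(\boldsymbol{\lambda})-\tau-\mu\|\boldsymbol{\lambda}-\widetilde{\boldsymbol{\lambda}}_s\|^2\bigr)\Bigr\}.
\]
The first term equals $0$. Hence $\phi=\min\{\rho_s:\rho_s\ge 0,\ \rho_s\ge \psi_s-\tau\}$, where
\[
\psi_s:=\sup_{\boldsymbol{\lambda}}\Bigl\{\min_{\mathbf{u}\in\widetilde{\mathcal{U}}}\delta\boldsymbol{\lambda}^\top\mathbf{u}-\mu\|\boldsymbol{\lambda}-\widetilde{\boldsymbol{\lambda}}_s\|^2\Bigr\}.
\]

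\textbf{Step 2 (minimax swap).} The function $\delta\boldsymbol{\lambda}^\top\mathbf{u}-\mu\|\boldsymbol{\lambda}-\widetilde{\boldsymbol{\lambda}}_s\|^2$ is linear (hence convex) in $\mathbf{u}$ over the compact convex set $\widetilde{\mathcal{U}}$, and concave in $\boldsymbol{\lambda}\in\mathbb{R}^T$. Sion's minimax theorem therefore lets me swap to $\min_{\mathbf{u}\in\widetilde{\mathcal{U}}}\sup_{\boldsymbol{\lambda}}$. For fixed $\mathbf{u}$ and $\mu>0$, the inner concave quadratic is maximized at $\boldsymbol{\lambda}=\widetilde{\boldsymbol{\lambda}}_s+\delta\mathbf{u}/(2\mu)$, with value $\delta\widetilde{\boldsymbol{\lambda}}_s^\top\mathbf{u}+\delta^2\|\mathbf{u}\|^2/(4\mu)$. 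This gives
\[
\psi_s=\min_{\mathbf{u}_s\in\widetilde{\mathcal{U}}}\ \delta\widetilde{\boldsymbol{\lambda}}_s^\top\mathbf{u}_s+\delta^2\|\mathbf{u}_s\|^2/(4\mu).
\]

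\textbf{Step 3 (epigraph and cone).} Introduce $\zeta_s\ge\|\mathbf{u}_s\|^2/(4\mu)$, i.e.\ $\|\mathbf{u}_s\|^2\le 4\mu\zeta_s$ with $\mu,\zeta_s\ge0$. This is a rotated second-order cone, so it is SOC-representable. Substituting into Step 1, $\phi$ equals the minimum over $(\rho_s,\zeta_s,\mathbf{u}_s)$ subject to the constraints in \eqref{eq:DRCVAR_problem_SOCP}. The minimization over $\mathbf{u}_s$ sits inside $\rho_s\ge\cdot$ and $\rho_s$ is monotone in that expression, so it can be lifted out. Joint minimization over $(\tau,\mu)$ and all scenario variables then gives \eqref{eq:DRCVAR_problem_SOCP}. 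Finally, $\widetilde{\mathcal{U}}$ is an affine image of the polyhedron $\mathcal{U}_0(\boldsymbol{\theta})$, so $\mathbf{u}_s\in\widetilde{\mathcal{U}}$ is linear-representable through an auxiliary lifted variable. The whole problem is therefore an SOCP.

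\textbf{Main obstacle: the boundary case $\mu=0$.} There $\psi_s=+\infty$ unless $\mathbf{u}=\mathbf{0}$. I expect this to be harmless, but it must be checked. Compare both sides with the convention that $\|\mathbf{u}\|^2\le0\cdot\zeta_s$ forces $\mathbf{u}_s=\mathbf{0}$. At $\mu=0$ the dual objective in \eqref{eq:DRCVaR_dual_formulation} is finite only if $\sup_{\boldsymbol{\lambda}}V=\sup_{\boldsymbol{\lambda}}\min_{\mathbf{u}}\delta\boldsymbol{\lambda}^\top\mathbf{u}$ is finite. That holds exactly when $\mathbf{0}\in\widetilde{\mathcal{U}}$, in which case the value is $0$, consistent with the cone convention. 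If instead $\mathbf{0}\notin\widetilde{\mathcal{U}}$, both formulations exclude $\mu=0$. Either way the two infima coincide, since each objective is lower semicontinuous as $\mu\downarrow0$. I would also remark that the condition $r>0$ is what makes the strong duality of Lemma~\ref{lemma:DRCVAR_dual} applicable, and that compactness of $\widetilde{\mathcal{U}}$ is what makes Sion's theorem applicable.
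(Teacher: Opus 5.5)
Your proposal is correct and follows the paper's proof essentially step for step. Both arguments split the positive part, swap $\min_{\mathbf{u}\in\widetilde{\mathcal{U}}}$ and $\sup_{\boldsymbol{\lambda}}$ using compactness and convexity of $\widetilde{\mathcal{U}}$, maximize the concave quadratic at $\boldsymbol{\lambda}=\widetilde{\boldsymbol{\lambda}}_s+\delta\mathbf{u}/(2\mu)$, and encode the result with the epigraph variable $\rho_s$ and the rotated cone for $\zeta_s$. Your explicit check of $\mu=0$ is more careful than the paper's one-line appeal to closure: both sides are infeasible or infinite unless $\mathbf{0}\in\widetilde{\mathcal{U}}$, and otherwise both equal $(-\tau)_+$. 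Since the values therefore agree pointwise at $\mu=0$, your lower-semicontinuity remark is not actually needed.
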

\begin{proof}
The detailed proof is given in Appendix~\ref{subsec:appendix_proof_prop1}.
\end{proof}

%Problem \eqref{eq:DRCVAR_problem_SOCP} is convex because, once $\boldsymbol{\theta}$ and $\mathbf{w}$ are specified, the aggregate-set membership constraint is linear, and $\left\| \mathbf{u}_s \right\| _2^2\le 4\mu \zeta_s$ with $\mu, \zeta_s\ge 0$ is a rotated second-order cone constraint. Thus, the forward utilization layer can be computed directly by off-the-shelf conic solvers, such as \texttt{GUROBI}. 

\subsubsection{Gradient Estimation in Backward Pass}
We then estimate the cost-informed gradients without differentiating through the building-level LPs. Let $\boldsymbol{\Theta}:=\rev{\left[\left(\boldsymbol{\theta}\right)^\top,\left(\mathbf{w}\right)^\top\right]^\top}\in \mathbb{R}^{3T+8}$ collect all learnable quantities, and write $J\left( \boldsymbol{\Theta} \right):=J\left( \boldsymbol{\theta},\mathbf{w} \right)$ for the corresponding forward-pass optimal value obtained by solving the building-level LPs
\eqref{eq:innerapp_problem_linear}, constructing the aggregate
set via \eqref{eq:innerapp_aggregate_affine}, and solving the
DR-CVaR SOCP \eqref{eq:DRCVAR_problem_SOCP}. Inspired by the perturbation-based approach in
\cite{berthet2020learning}, we optimize a smoothed objective. For a smoothing parameter $\sigma>0$, let $\boldsymbol{\Theta}^{\dagger}$ denote a Gaussian perturbation of $\boldsymbol{\Theta}$, i.e., $\boldsymbol{\Theta}^{\dagger}\sim \mathcal{N}\left( \boldsymbol{\Theta},\sigma ^2\mathbf{I}_{3T+8} \right)$. The smoothed objective is
\beq \label{eq:SS_objective} 
\eqfitu{
	J_{\sigma}\left( \boldsymbol{\Theta} \right):=\mathbb{E} _{\boldsymbol{\Theta }^{\dagger}}\left[ J\left( \boldsymbol{\Theta}^{\dagger} \right) \right] .
}
\eeq	

Equation \eqref{eq:SS_objective} defines randomized smoothing:
it averages the utilization cost over Gaussian perturbations
of the learnable parameters. Each perturbed cost
$J(\boldsymbol{\Theta}^{\dagger})$ is evaluated by solving
the building-level LPs and the downstream DR-CVaR SOCP
with the perturbed parameters. The gradient of this
smoothed objective can then be expressed using the
score-function identity as
  \beq \label{eq:SS_scorefunction_expectation} 
  \eqfitu{
\nabla _{\boldsymbol{\Theta}}J_{\sigma}\left( \boldsymbol{\Theta} \right)=\mathbb{E} _{\boldsymbol{\Theta }^{\dagger}}\left[ J\left( \boldsymbol{\Theta}^{\dagger} \right)\nabla _{\boldsymbol{\Theta}}\log p_{\boldsymbol{\Theta}}\left( \boldsymbol{\Theta}^{\dagger} \right) \right],
  }
  \eeq	
where $p_{\boldsymbol{\Theta}}$ denotes the Gaussian density
centered at $\boldsymbol{\Theta}$. Its score function with
respect to $\boldsymbol{\Theta}$ is
  \beq
\nabla_{\boldsymbol{\Theta}}
\log p_{\boldsymbol{\Theta}}
\left(\boldsymbol{\Theta}^{\dagger}\right)
=
\frac{\boldsymbol{\Theta}^{\dagger}-\boldsymbol{\Theta}}{\sigma^2}.
\eeq
Substituting this score function into
\eqref{eq:SS_scorefunction_expectation}, we approximate
the expectation using $N$ perturbed samples
$\{\boldsymbol{\Theta}_n^{\dagger}\}_{n=1}^N$,
each distributed as
$\mathcal{N}\left(\boldsymbol{\Theta},
\sigma^2\mathbf{I}_{3T+8}\right)$, to obtain
  \beq \label{eq:SS_scorefunction_sample} 
  \eqfitu{
\widehat{\nabla}J_{\sigma}\left( \boldsymbol{\Theta} \right)=\frac{1}{N\sigma ^2}\sum_{n=1}^N{J\left( \boldsymbol{\Theta}_n^{\dagger} \right)\left( \boldsymbol{\Theta}_n^{\dagger}-\boldsymbol{\Theta} \right)}.
  }
  \eeq
  This estimator requires only forward cost evaluations at
perturbed parameters, avoiding differentiation through
the building-level LPs.
  
\subsubsection{Tailored Solution Procedure}
Based on the forward SOCP reformulation and the backward gradient estimation, the overall learning procedure is summarized in Algorithm \ref{alg:learning_pipeline}.

\begin{algorithm}[ht]
	\caption{Cost-informed learning procedure.}
	\label{alg:learning_pipeline}
	\begin{algorithmic}[1]
		\State \textbf{Input:} individual sets $\left\{ \mathcal{U}_b \right\}_{b\in \mathcal{B}}$, price scenarios $\left\{ \widetilde{\boldsymbol{\lambda}}_s \right\}_{s\in \mathcal{S}}$, initial parameters $\boldsymbol{\Theta}_0=\rev{\left[\left(\boldsymbol{\theta}_0\right)^\top,\left(\mathbf{w}_0\right)^\top\right]^\top}$, smoothing variance $\sigma^2$, sample size $N$, iteration number $K$, step size $\nu$.
		\For{$k=0,1,\ldots,K-1$}
		\For{$n=1,\ldots,N$}
		\State Sample $\boldsymbol{\Theta}_{k,n}^{\dagger}\sim \mathcal{N}\left( \boldsymbol{\Theta}_k,\sigma ^2\mathbf{I}_{3T+8} \right)$.
		\State Recover $\boldsymbol{\theta}_{k,n}^{\dagger}$ and $\mathbf{w}_{k,n}^{\dagger}$ from $\boldsymbol{\Theta}_{k,n}^{\dagger}$.
		\State Solve the building-level LPs \eqref{eq:innerapp_problem_linear} for all $b\in \mathcal{B}$.
		\State Construct $\widetilde{\mathcal{U}}\left( \boldsymbol{\theta}_{k,n}^{\dagger},\mathbf{w}_{k,n}^{\dagger} \right)$ using \eqref{eq:innerapp_aggregate_affine}.
		\State Solve the DR-CVaR SOCP \eqref{eq:DRCVAR_problem_SOCP} and record $J_{k,n}$.
		\EndFor
		\State Estimate $\widehat{\nabla}J_{\sigma}\left( \boldsymbol{\Theta}_k \right)$ using \eqref{eq:SS_scorefunction_sample}.
		\State Update $\boldsymbol{\Theta}_{k+1}\leftarrow \boldsymbol{\Theta}_k-\nu \widehat{\nabla}J_{\sigma}\left( \boldsymbol{\Theta}_k \right)$.
		\EndFor
		\State \textbf{Output:} learned aggregate set $\widetilde{\mathcal{U}}\left( \boldsymbol{\theta}_K,\mathbf{w}_K \right)$.
	\end{algorithmic}
\end{algorithm}

\begin{remark}[Parallel Computation]\label{remark:parallel_computation}
	The proposed procedure has a parallel computational structure. For each perturbed parameter vector, the building-level LPs in \eqref{eq:innerapp_problem_linear}, corresponding to Step~6 of Algorithm~\ref{alg:learning_pipeline}, are independent across buildings and can be solved in parallel. Once the aggregate set is constructed, the DR-CVaR SOCP in \eqref{eq:DRCVAR_problem_SOCP} depends only on the aggregate affine surrogate, rather than on individual building models. Therefore, the computational burden of the utilization layer does not grow with the number of buildings. In addition, the Monte Carlo evaluations used for gradient estimation are independent across perturbation samples, providing another level of parallelism in Step~10 of Algorithm~\ref{alg:learning_pipeline}. Hence, parallel implementation can reduce the overall computation time, supporting scalable learning for large HVAC fleets.
\end{remark}

\subsection{Implementation Workflow}\label{subsec:framework_workflow}
The proposed framework can be implemented in three phases, as illustrated in Fig. \ref{fig:Workflow}.
\begin{itemize}
	\item \textbf{Phase 1: Model preparation.} The aggregator collects individual HVAC flexibility models $\left\{ \mathcal{U}_b \right\}_{b\in\mathcal{B}}$ and historical or forecast price trajectories $\left\{ \widetilde{\boldsymbol{\lambda}}_s \right\}_{s\in\mathcal{S}}$. These price trajectories define the nominal distribution $P_0$, while $\alpha$ and $r$ specify the tail-risk preference and the confidence in the nominal price distribution.
	
	\item \textbf{Phase 2: Cost-informed learning.} Using the prepared HVAC models and price scenarios, the aggregator runs Algorithm \ref{alg:learning_pipeline}. This phase jointly updates the storage-surrogate parameters and inner-approximation objective through downstream DR-CVaR feedback, and returns the learned aggregate set $\widetilde{\mathcal{U}}\left( \boldsymbol{\theta}_K,\mathbf{w}_K \right)$.
	
	\item \textbf{Phase 3: Dispatch and disaggregation.} The learned aggregate set is submitted to the system operator,
	which determines an aggregate schedule $\mathbf{u}^{\star}$
	within the submitted set and the associated surrogate charge/discharge profile $\rev{\left[\left(\mathbf{u}^{\mathrm{C}\star}\right)^\top,\left(\mathbf{u}^{\mathrm{D}\star}\right)^\top\right]^\top}\in\mathcal{U}_0\left( \boldsymbol{\theta}_K \right)$. In the numerical experiments, this
	utilization stage is evaluated using the realized-price proxy
	in \eqref{eq:utilize_realizedprice}. Since the learned set is an inner approximation, the aggregate schedule can be disaggregated into feasible building-level schedules by the affine maps obtained during learning:
	  \beq \label{eq:disaggregation} 
	\eqfitu{
	\mathbf{u}_{b}^{\star}=\boldsymbol{\gamma }_{b,K}+\boldsymbol{\Phi}\boldsymbol{\Gamma }_{b,K}\rev{\left[\left(\mathbf{u}^{\mathrm{C}\star}\right)^\top,\left(\mathbf{u}^{\mathrm{D}\star}\right)^\top\right]^\top},\quad \forall b\in \mathcal{B}.
	}
	\eeq	
	These schedules satisfy $\mathbf{u}_{b}^{\star}\in\mathcal{U}_b$ and $\mathbf{u}^{\star}=\sum_{b\in\mathcal{B}}{\mathbf{u}_{b}^{\star}}$.
\end{itemize}

\begin{figure}[!t]
	\centering
	\includegraphics[width=\linewidth]{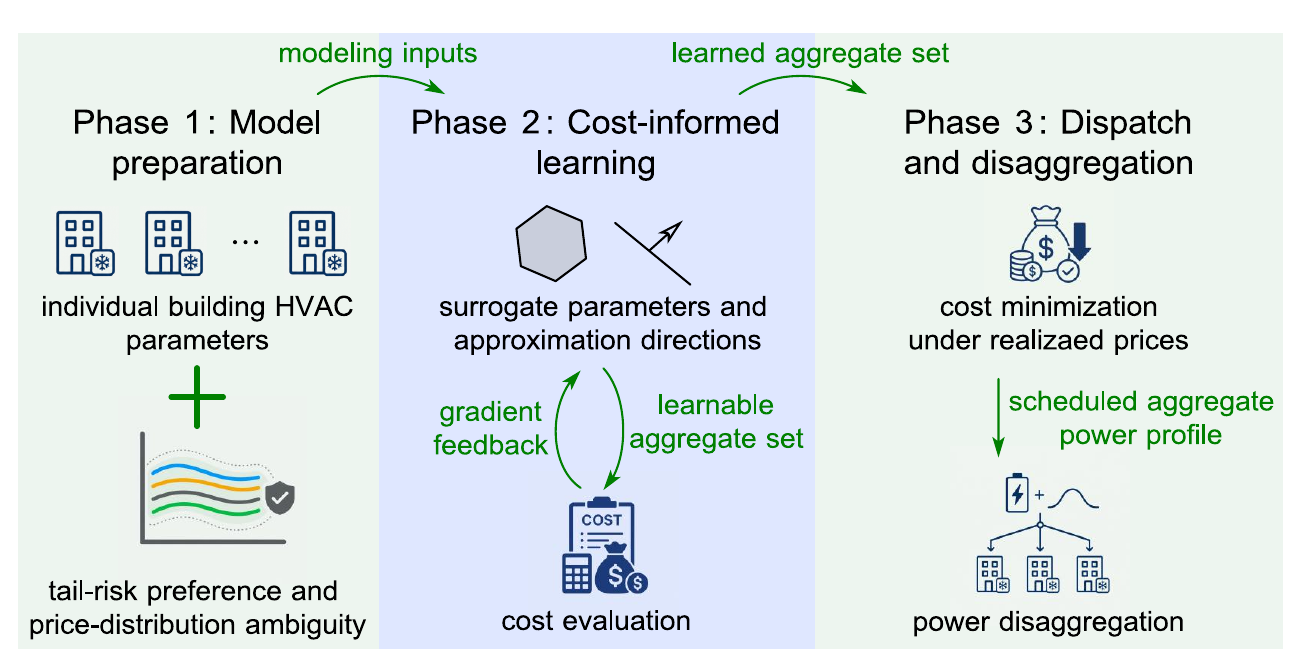}
	\caption{Implementation workflow of the proposed cost-informed framework}
	\label{fig:Workflow}
\end{figure}

%% file: IV_Case.tex
In this section, we present case studies to demonstrate the economic benefit of the proposed cost-informed learning framework and the effectiveness of its risk- and ambiguity-aware modeling.

\subsection{Simulation Setup}\label{subsec:case_setup}
\subsubsection{Parameter Settings}
We consider a 24-hour horizon with an hourly time interval ($\delta=1$) and an HVAC fleet consisting of 30 buildings ($B=30$). The HVAC parameter intervals are summarized in Table~\ref{tab:parameter_HVAC} where the parameters of each building are sampled uniformly from these intervals. The electricity prices are taken from the 2025 NYISO New York City Zone J day-ahead price data \cite{NYCPrice_2025}. We use March 2025 as the training period, April 2025 as an in-distribution testing period, and July 2025 as an out-of-distribution testing period. The corresponding price trajectories are shown in Appendix~\ref{subsec:appendix_price}. The April test set evaluates performance when the testing distribution is close to the training distribution, whereas the July test set provides a more demanding assessment under distribution shift and high-price tail events.

\begin{table}[!t]
	\footnotesize
	\caption{HVAC Parameters}
	\begin{center}
		\begin{tabular}{ ccc } 
			\hline
			Parameter& Description & Value \\ 
			\hline
			$r_{b}^{\mathrm{I}}$ & Thermal resistance of air & $[3, 5]$ $^\circ$C/kW \\
			$r_{b}^{\mathrm{M}}$ & Thermal resistance of mass & $[5, 7]$ $^\circ$C/kW \\
			$c_{b}^{\mathrm{I}}$ & Thermal capacity of air & $[5, 7]$ kWh/$^\circ$C \\
			$c_{b}^{\mathrm{M}}$ & Thermal capacity of mass & $[6, 8]$ kWh/$^\circ$C\\
			$\eta _b$ & Coefficient of performance  & $[2,3]$ \\
			$\alpha _b$ &  Fractional coefficient of power & $[0.03, 0.06]$  \\
			$x_{b}^{\mathrm{set}}$ & Temperature setpoint  & $[19, 23]$  $^\circ$C \\
			$x_{b}^{\mathrm{tol}}$ & Comfort tolerance band  & $[1.5,2.5]$  $^\circ$C \\
			$\underline{u}_b$ & Minimum HVAC power  & $[0.3,0.7]$ kW \\
			$\overline{u}_b$ & Maximum HVAC power  & $[6,7]$ kW \\
			\hline
		\end{tabular}
		\label{tab:parameter_HVAC}
	\end{center}
\end{table}

\subsubsection{Test Platform}
All numerical experiments are conducted on a laptop with a 2.20 GHz CPU and 32 GB of RAM. The optimization problems are solved using \texttt{GUROBI}. We train the proposed method for 50 iterations with the step size $\nu=0.001$ in Algorithm~\ref{alg:learning_pipeline}. Following standard practice \cite{berthet2020learning}, we use $N=2$ Gaussian perturbations per iteration.

\subsubsection{Benchmarks}
We compare the following methods:
\begin{itemize}
	\item \textsf{Oracle}: a non-aggregation benchmark that directly solves the original utilization problem with access to all individual HVAC flexibility sets and realized prices. It serves as a best-case performance reference rather than a deployable aggregation method.
	\item \textsf{VMA}: the volume-maximization aggregation method used in \cite{al2024efficient,liu2026coupling}, which constructs a maximum-volume inner approximation without cost-informed feedback.
	\item \textsf{Proposed}: the proposed cost-informed learning method, which updates the surrogate parameters and inner-approximation objective with cost-informed feedback.
\end{itemize}

For each price scenario, we first compute the utilization cost by solving \eqref{eq:utilize_realizedprice} over the aggregate set produced by each method. The \emph{cost gap} is then defined as the percentage increase of this cost relative to \textsf{Oracle}, with lower values indicating better economic performance. Repeating this calculation over all days in a training or testing period gives the cost-gap distributions reported below.

\subsection{Cost-Informed Learning Results}\label{subsec:case_learning_result}
We first evaluate cost-informed learning using March 2025 for training and April 2025 for testing. Since the March and April price distributions are close and contain no severe tail-risk testing events, we set $\alpha=1$ and $r=0.01$, corresponding to a risk-neutral DR-expected setting with a small ambiguity radius. The simulation results are shown in Fig.~\ref{fig:ConvergenceAnalysis}--Fig.~\ref{fig:DispatchResults_Temperature}.

\begin{figure}[!t]
	\centering
	\includegraphics[width=\linewidth]{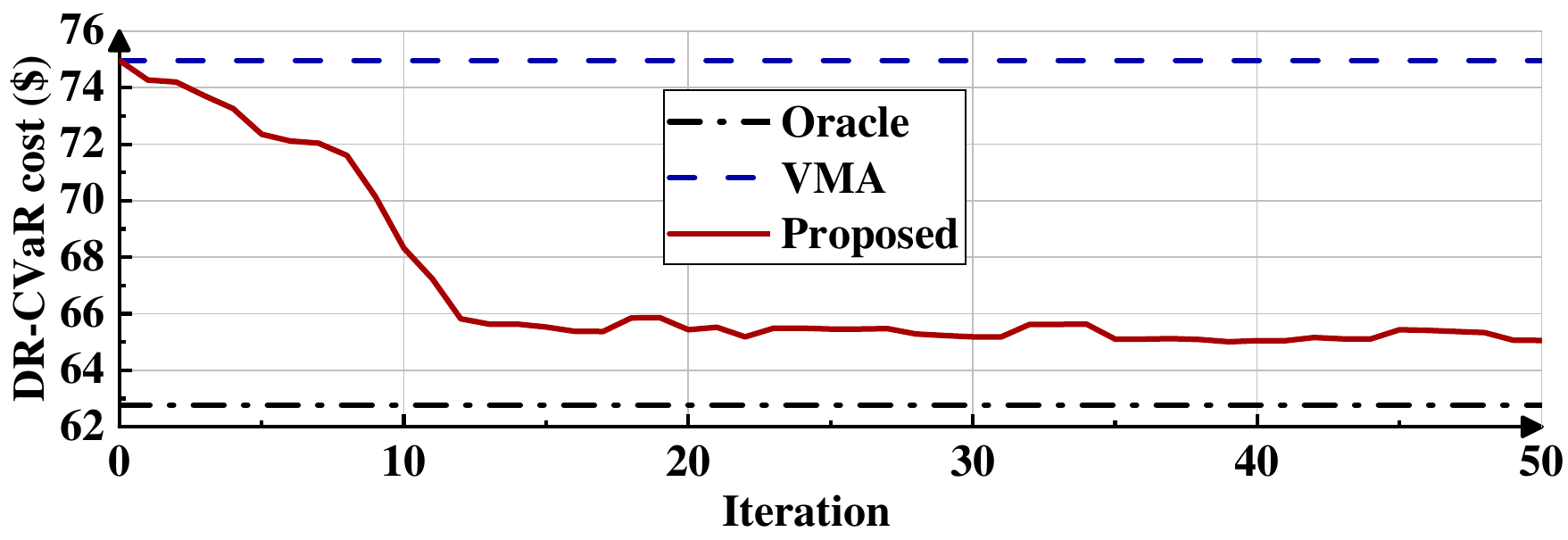}
	\caption{Convergence trajectories of the DR-CVaR utilization cost.}
	\label{fig:ConvergenceAnalysis}
\end{figure}

\begin{figure}[!t]
	\centering
	\includegraphics[width=\linewidth]{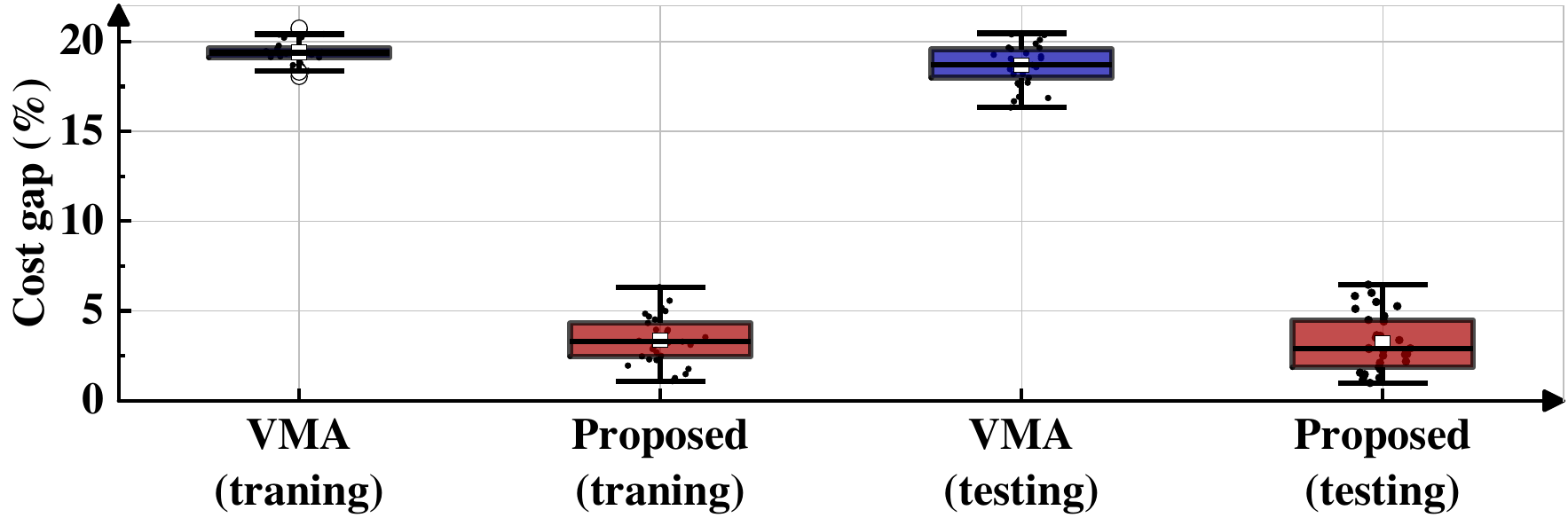}
	\caption{Training and testing cost-gap distributions for different methods.}
	\label{fig:CostResults}
\end{figure}

\begin{figure}[!t]
	\centering
	\includegraphics[width=\linewidth]{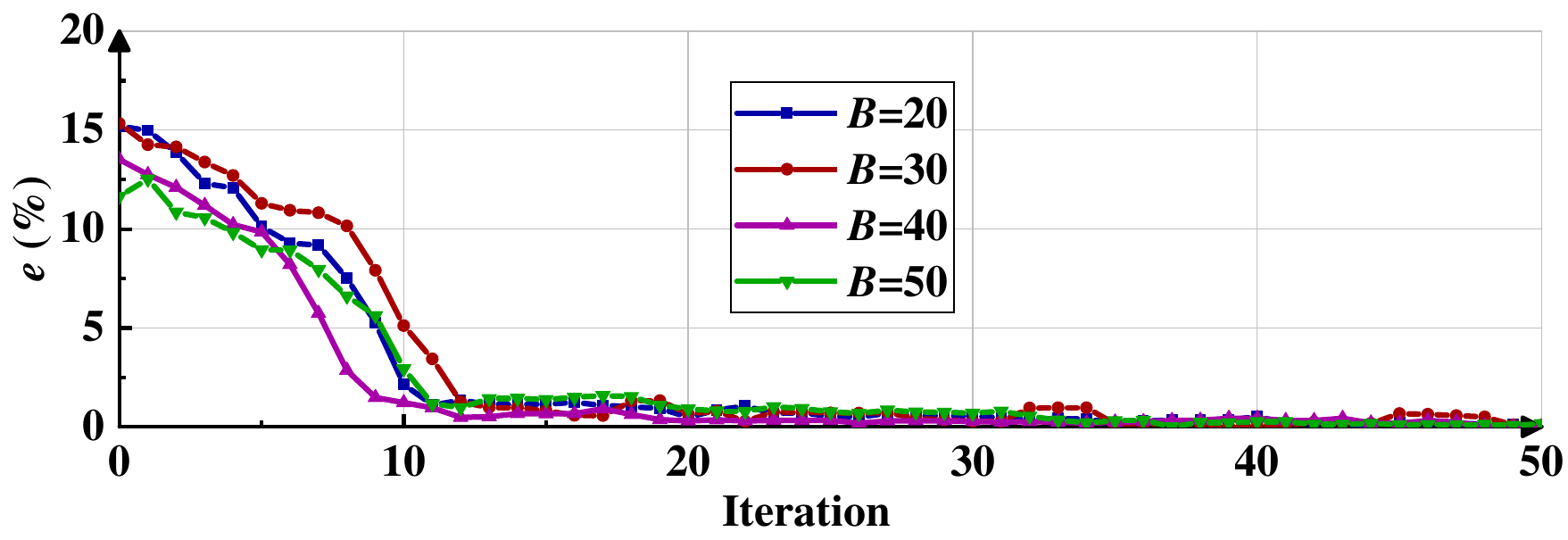}
	\caption{Convergence of normalized suboptimality for different fleet sizes.}
	\label{fig:ComputationalEfficiency_profile}
\end{figure}

\begin{figure}[!t]
	\centering
	\includegraphics[width=\linewidth]{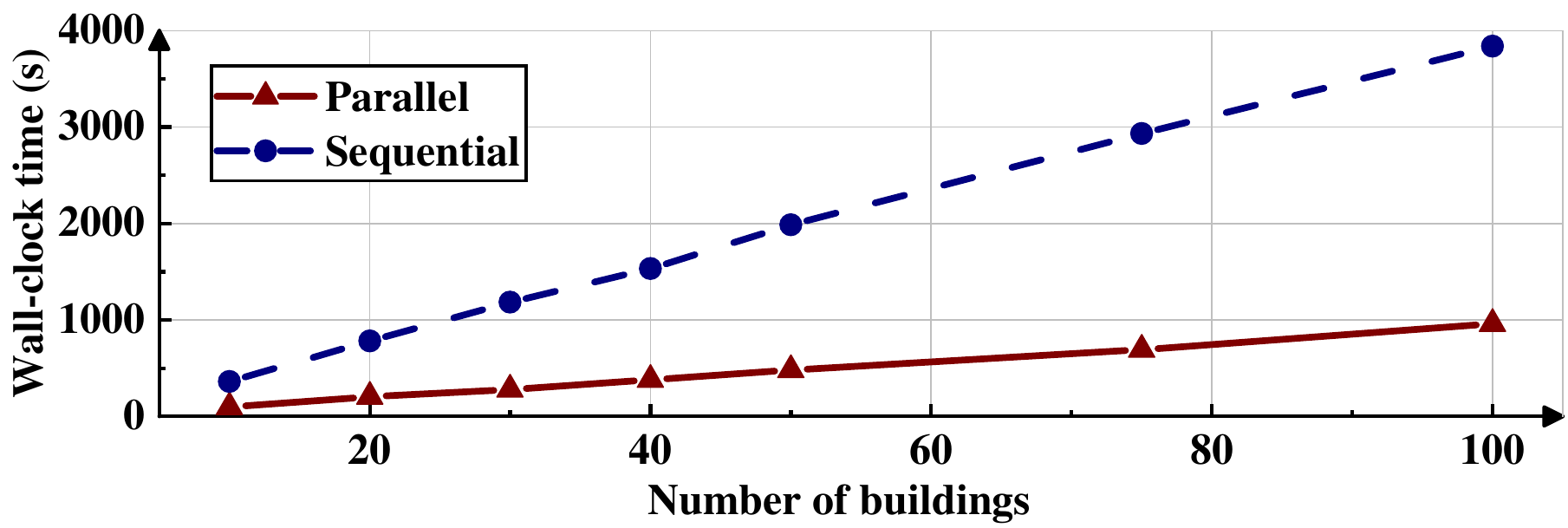}
	\caption{Total wall-clock time of the learning procedure for different fleet sizes under sequential and parallel execution.
	}
	\label{fig:ComputationalEfficiency_CPUtime}
\end{figure}

\begin{figure}[!t]
	\centering
	\includegraphics[width=\linewidth]{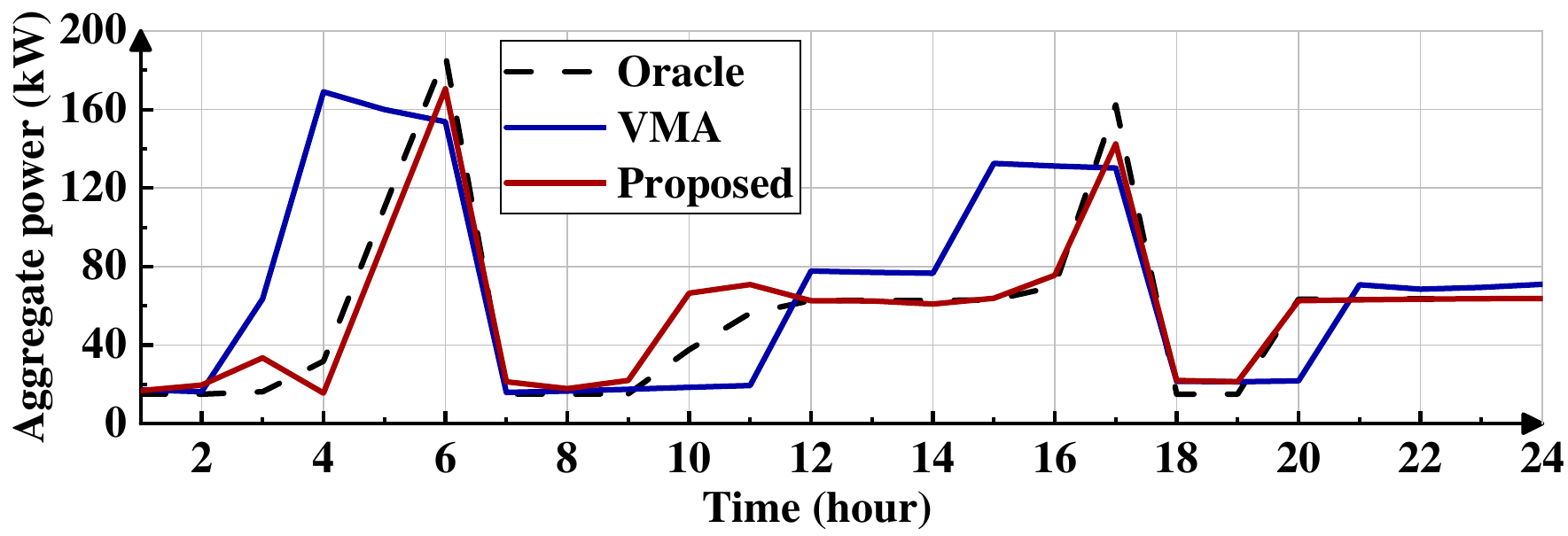}
	\caption{Aggregate power profiles on March 5, 2025.}
	\label{fig:DispatchResults_Power}
\end{figure}

\begin{figure}[!t]
	\centering
	\includegraphics[width=\linewidth]{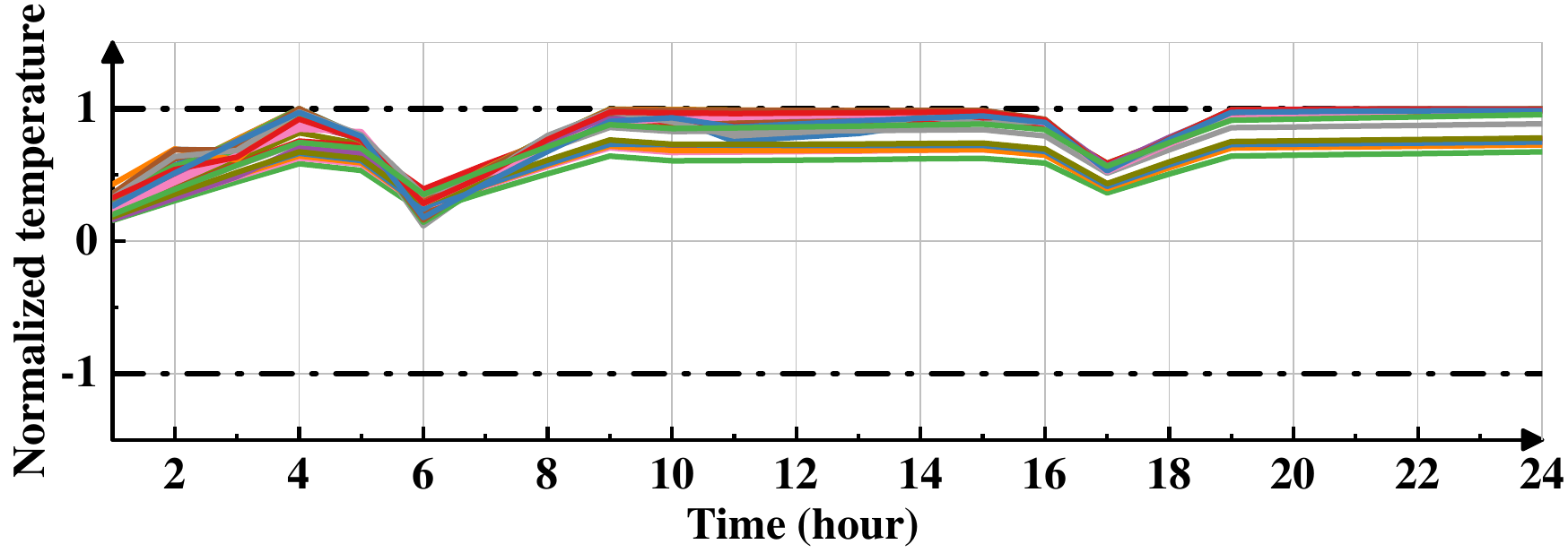}
	\caption{Disaggregated normalized building temperatures obtained by \eqref{eq:disaggregation} on March 5, 2025. The black dash-dotted lines indicate user comfort bounds, while the colored profiles show the normalized temperature trajectories of individual buildings.}
	\label{fig:DispatchResults_Temperature}
\end{figure}

\subsubsection{Convergence Analysis}
Fig.~\ref{fig:ConvergenceAnalysis} shows the learning trajectory of \textsf{Proposed}. Since the procedure is initialized from the volume-maximization aggregation solution, as discussed in Remark~\ref{remark:compare_bitar}, its initial utilization cost is the same as that of \textsf{VMA}. After the storage-surrogate parameters and inner-approximation objective are updated, the cost decreases and moves toward the \textsf{Oracle} benchmark, while \textsf{VMA} remains fixed as the initialization reference. For this training experiment, the learned aggregate set therefore attains a lower utilization cost than the initial volume-oriented set.

\subsubsection{Cost Gap}
Fig.~\ref{fig:CostResults} evaluates the learned aggregate set over all March training scenarios and April testing scenarios. In both distributions, \textsf{Proposed} achieves lower cost gaps than \textsf{VMA}, showing that the improvement observed during training also transfers to the in-distribution test period. This comparison illustrates that the larger-volume inner approximation is not necessarily more valuable for downstream dispatch. Its value also depends on whether it preserves flexibility in economically active directions.

\subsubsection{Computational Efficiency}
We next evaluate computational efficiency using Fig.~\ref{fig:ComputationalEfficiency_profile} and Fig.~\ref{fig:ComputationalEfficiency_CPUtime}. In Fig.~\ref{fig:ComputationalEfficiency_profile}, the normalized suboptimality is defined as
\beq
e_k:=\frac{J\left( \boldsymbol{\Theta}_k \right)-J\left( \boldsymbol{\Theta}_K \right)}
{J\left( \boldsymbol{\Theta}_K \right)}\times 100\% ,
\eeq
where $\boldsymbol{\Theta}_k$ denotes the learnable parameters at iteration $k$ and $K$ is the final iteration. For $B\in\{20,30,40,50\}$, the learning trajectories become stable after approximately 30 iterations. This behavior is consistent with the fact that the number of learnable storage-surrogate parameters and inner-approximation objective does not grow with the number of buildings, although the building-level LP workload does.

Fig.~\ref{fig:ComputationalEfficiency_CPUtime} further shows the resulting total iteration computation times as a function of the fleet size $B$. For $B = 100$, computation takes $3838.5$~s sequentially and $960.5$~s with 12 parallel workers. Under the fixed 12-worker configuration, parallel computation times increase approximately linearly with fleet size. These results demonstrate the computational benefits of the building-separable architecture for large-fleet aggregation. Additional processor cores or distributed workers could further reduce computation time, ensuring that the proposed learning procedure scales efficiently with the number of buildings.

\subsubsection{Dispatch Results}
Fig.~\ref{fig:DispatchResults_Power} compares the aggregate power profiles on March 5, 2025. The profile produced by \textsf{Proposed} is much closer to \textsf{Oracle} than the profile produced by \textsf{VMA}. The difference is most visible at cost-sensitive operating periods. Around period~6, the economic action is to increase cooling power in advance of a price peak, thereby precooling the buildings. Around period~8, the aggregator then reduces power consumption and uses the stored thermal flexibility. A similar pattern occurs over periods~17--19: the dispatch increases power in period~17 and reduces power during the subsequent high-price periods.

However, \textsf{VMA} does not explicitly distinguish these cost-sensitive directions from less relevant directions. As a result, it can preserve aggregate volume while still losing flexibility that is valuable for the downstream utilization problem. \textsf{Proposed} instead reallocates representation capacity toward the regions activated by the cost-minimizing dispatch, which explains its closer agreement with \textsf{Oracle}.

Fig.~\ref{fig:DispatchResults_Temperature} illustartes the closed-form disaggregation strategy in \eqref{eq:disaggregation}. The normalized temperature is defined as $\left( x_{b,t}^{\mathrm{I}}-x_{b}^{\mathrm{set}} \right)/x_{b}^{\mathrm{tol}}$. All normalized building temperatures remain within $[-1,1]$, showing that the aggregate schedule obtained from the learned inner approximation can be disaggregated into feasible building-level schedules that satisfy the comfort constraints.

In summary, the above results demonstrate that the proposed cost-informed learning method improves downstream economic performance while preserving computational tractability and disaggregation feasibility.

\subsection{Risk and Ambiguity Sensitivity Analysis}\label{subsec:case_unseen_price}

\begin{figure}[!t]
	\centering
	\includegraphics[width=\linewidth]{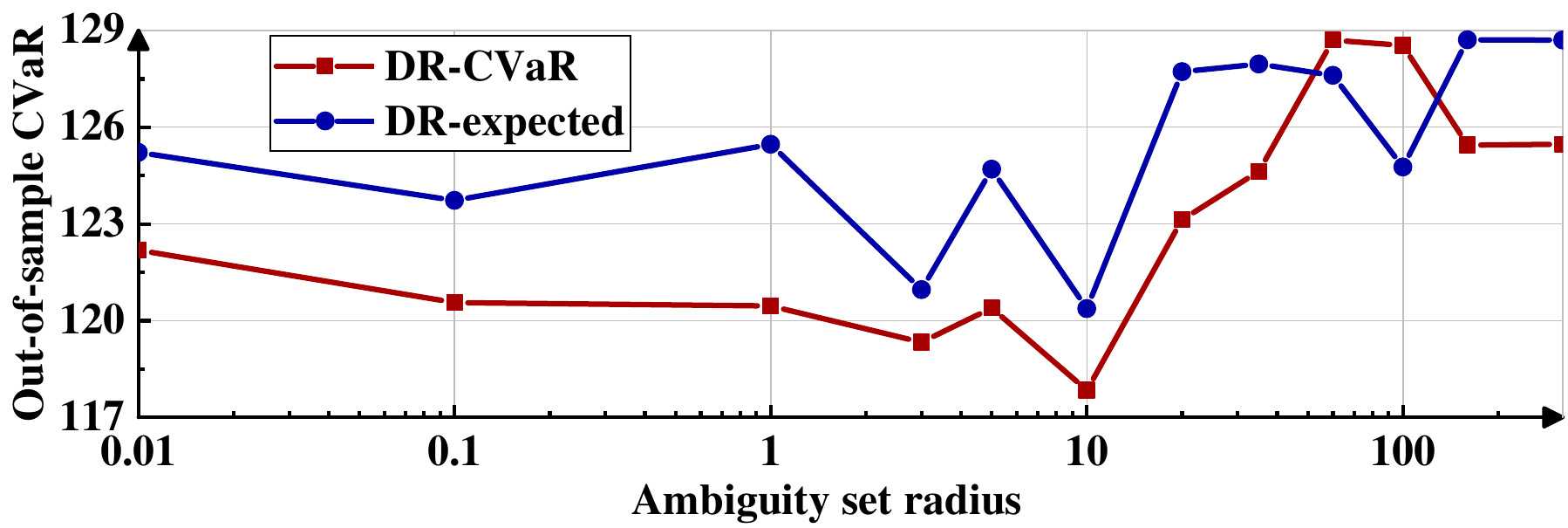}
	\caption{Out-of-sample CVaR of testing costs versus ambiguity set radius $r$.}
	\label{fig:Out-of-SampleCVaR}
\end{figure}

\begin{figure}[!t]
	\centering
	\includegraphics[width=\linewidth]{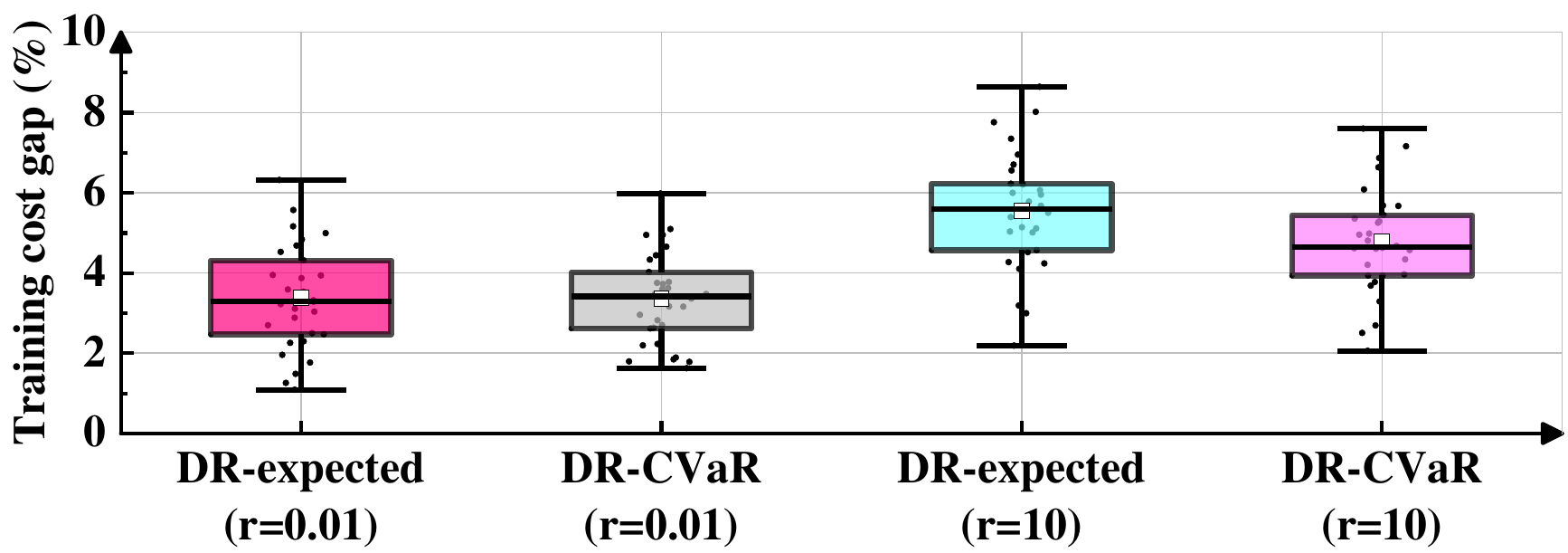}
	\caption{Training cost-gap distributions under different risk and ambiguity settings.}
	\label{fig:GeneralizationPerformance_train}
\end{figure}

\begin{figure}[!t]
	\centering
	\includegraphics[width=\linewidth]{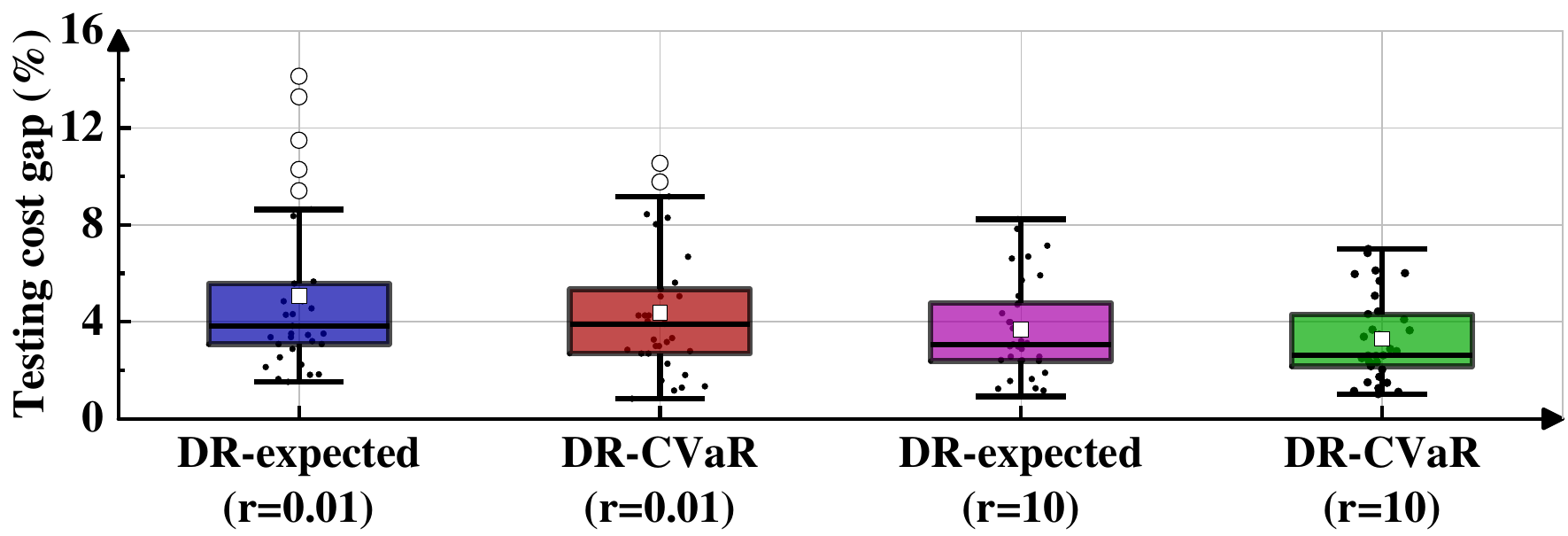}
	\caption{Testing cost-gap distributions under different risk and ambiguity settings.}
	\label{fig:GeneralizationPerformance_test}
\end{figure}

We further evaluate the risk- and ambiguity-aware modeling component by using March 2025 data for training and July 2025 data for testing. The July data contain high-price tail scenarios and exhibit a stronger distribution shift from the March training distribution, making them suitable for assessing out-of-sample generalization. We compare the DR-expected objective with $\alpha=1$ and the DR-CVaR objective with $\alpha=0.1$. The simulation results are shown in Fig.~\ref{fig:Out-of-SampleCVaR}--Fig.~\ref{fig:GeneralizationPerformance_test}.

\subsubsection{Out-of-Sample CVaR}

Fig.~\ref{fig:Out-of-SampleCVaR} shows the out-of-sample CVaR of July testing costs under different ambiguity radii. Both methods are evaluated using the empirical CVaR at the
same tail probability level $\alpha_{\mathrm{eval}}=0.1$
on the July testing scenarios. When $r$ is close to zero, the ambiguity set is close to the nominal March training distribution. Increasing $r$ introduces distributional robustness. However, the testing CVaR does not decrease monotonically as $r$ grows. A moderate radius, especially around $r=10$, yields the lowest testing CVaR among the tested settings. This is a post hoc sensitivity observation from the July results rather than a value selected using an independent calibration set. In the small-to-moderate radius range, the DR-CVaR formulation generally achieves a lower out-of-sample CVaR than the DR-expected formulation. This indicates that incorporating price-distribution ambiguity and tail-risk preferences into aggregate-set learning can improve generalization under the July price shift while reducing tail risk. When $r$ becomes too large, the learned aggregate set may become overly conservative or mismatched with the testing distribution, and the out-of-sample CVaR no longer improves.

\subsubsection{Generalization Performance}

Fig.~\ref{fig:GeneralizationPerformance_train} and Fig.~\ref{fig:GeneralizationPerformance_test} further compare the training and testing cost-gap distributions. In the March training scenarios, small-radius settings fit the nominal training distribution well, whereas $r=10$ introduces additional robustness and may increase the in-sample gap. This moderate loss of in-sample performance reflects the intended tradeoff for improving robustness under distribution shift.

The July testing results demonstrate the benefit of this tradeoff. With $r=0.01$, both formulations exhibit larger testing tails, and the tail gap of the DR-expected formulation reaches about 15\%. Increasing the radius to $r=10$ reduces the testing tail gaps, with the DR-CVaR formulation producing the most compact tail and keeping the gap below about 8\%. This reduction in testing tail risk indicates that a suitable ambiguity radius among the tested settings improves robustness to price-distribution shifts, while the DR-CVaR formulation helps the learned aggregate set avoid high-cost tail outcomes.

Overall, these results demonstrate that the proposed risk- and ambiguity-aware modeling component improves the generalization of the learned aggregate set to unseen price scenarios while reducing tail risk.

%% file: Appendix.tex
\section{Proof of Proposition \ref{prop:DRCVAR_problem_SOCP}}\label{subsec:appendix_proof_prop1}
Fix $\boldsymbol{\theta}$ and $\mathbf{w}$, and write $\widetilde{\mathcal{U}}:=\widetilde{\mathcal{U}}\left( \boldsymbol{\theta},\mathbf{w} \right)$. Since $\widetilde{\mathcal{U}}$ is nonempty, compact, and convex, $V\left( \boldsymbol{\lambda};\widetilde{\mathcal{U}} \right)$ in \eqref{eq:utilize_realizedprice} is finite, continuous, and has at most linear growth.

The proof follows the Wasserstein DR-CVaR duality in Lemma \ref{lemma:DRCVAR_dual} and finite-support epigraph reformulation in \cite[Theorem 3]{taha2026wasserstein}, specialized to the utilization value $V\left( \boldsymbol{\lambda};\widetilde{\mathcal{U}} \right)$. By Lemma \ref{lemma:DRCVAR_dual}, the worst-case CVaR problem \eqref{eq:DRCVAR_problem_original} admits the Wasserstein-CVaR dual reformulation \eqref{eq:DRCVaR_dual_formulation}. Since $P_0$ is uniform over the finite support $\{\widetilde{\boldsymbol{\lambda}}_s\}_{s\in\mathcal{S}}$, the expectation in \eqref{eq:DRCVaR_dual_formulation} becomes the sample average
\beq
\eqfitu{
J\left( \boldsymbol{\theta},\mathbf{w} \right)=\min_{\tau \in \mathbb{R},\,\mu \ge 0}\left\{ \tau +\frac{1}{\alpha}\left( \mu r^2+\frac{1}{S}\sum_{s\in\mathcal{S}}{\phi\left( \tau,\mu,\widetilde{\boldsymbol{\lambda}}_s \right)} \right) \right\},
}
\eeq
where $\phi\left( \tau,\mu,\widetilde{\boldsymbol{\lambda}}_s \right)$ denotes the function in \eqref{eq:DRCVaR_dual_definition} evaluated at the finite-support sample $\widetilde{\boldsymbol{\lambda}}_s$.

For $\mu>0$, using $\left( x \right)_+=\max \left\{ x,0 \right\}$ gives
\beq
\eqfitu{
\begin{aligned}
\phi\left( \tau,\mu,\widetilde{\boldsymbol{\lambda}}_s \right)
&=\max\left\{
\max_{\boldsymbol{\lambda}\in\mathbb{R}^T}\left( V\left( \boldsymbol{\lambda};\widetilde{\mathcal{U}} \right)-\tau-\mu \left\| \boldsymbol{\lambda}-\widetilde{\boldsymbol{\lambda}}_s \right\|_2^2 \right),
\max_{\boldsymbol{\lambda}\in\mathbb{R}^T}\left( -\mu \left\| \boldsymbol{\lambda}-\widetilde{\boldsymbol{\lambda}}_s \right\|_2^2 \right)
\right\} \\
&=\max\left\{
\max_{\boldsymbol{\lambda}\in\mathbb{R}^T}\left( V\left( \boldsymbol{\lambda};\widetilde{\mathcal{U}} \right)-\tau-\mu \left\| \boldsymbol{\lambda}-\widetilde{\boldsymbol{\lambda}}_s \right\|_2^2 \right),
0
\right\}.
\end{aligned}
}
\eeq

The second maximization $\max_{\boldsymbol{\lambda }\in \mathbb{R} ^T} \left( -\mu \left\| \boldsymbol{\lambda }-\widetilde{\boldsymbol{\lambda }}_s \right\| _{2}^{2} \right) $ equals $0$, attained at $\boldsymbol{\lambda}=\widetilde{\boldsymbol{\lambda}}_s$. From \eqref{eq:utilize_realizedprice}, we have
\beq
\eqfitu{
V\left( \boldsymbol{\lambda};\widetilde{\mathcal{U}} \right)=\min_{\mathbf{u}\in\widetilde{\mathcal{U}}}\delta \boldsymbol{\lambda}^{\top}\mathbf{u}.
}
\eeq
After substituting \eqref{eq:utilize_realizedprice} into the remaining maximization term in the preceding display, the minimax interchange is valid because $\widetilde{\mathcal{U}}$ is compact and convex, the objective is affine in $\mathbf{u}$ and strictly concave and coercive in $\boldsymbol{\lambda}$ for $\mu>0$. Hence,
\beq
\eqfitu{
\begin{aligned}
\max_{\boldsymbol{\lambda}\in\mathbb{R}^T}\min_{\mathbf{u}\in\widetilde{\mathcal{U}}}\left\{ \delta \boldsymbol{\lambda}^{\top}\mathbf{u}-\tau-\mu \left\| \boldsymbol{\lambda}-\widetilde{\boldsymbol{\lambda}}_s \right\|_2^2 \right\}
=\min_{\mathbf{u}\in\widetilde{\mathcal{U}}}\max_{\boldsymbol{\lambda}\in\mathbb{R}^T}\left\{ \delta \boldsymbol{\lambda}^{\top}\mathbf{u}-\tau-\mu \left\| \boldsymbol{\lambda}-\widetilde{\boldsymbol{\lambda}}_s \right\|_2^2 \right\}.
\end{aligned}
}
\eeq
For fixed $\mathbf{u}$, the first-order condition of the inner maximization is
\beq
\eqfitu{
\delta \mathbf{u}-2\mu\left( \boldsymbol{\lambda}-\widetilde{\boldsymbol{\lambda}}_s \right)=0,
}
\eeq
so $\boldsymbol{\lambda}^{\star}=\widetilde{\boldsymbol{\lambda}}_s+\delta \mathbf{u}/(2\mu)$. Substituting this optimizer yields
\beq
\eqfitu{
\phi\left( \tau,\mu,\widetilde{\boldsymbol{\lambda}}_s \right)=\left( \min_{\mathbf{u}_s\in\widetilde{\mathcal{U}}}\left\{ \delta \widetilde{\boldsymbol{\lambda}}_s^{\top}\mathbf{u}_s+\frac{\delta ^2\left\| \mathbf{u}_s \right\|_2^2}{4\mu}-\tau \right\} \right)_+ .
}
\eeq
Since $\widetilde{\mathcal{U}}$ is nonempty and compact, the minimum is attained and
\beq
\eqfitu{
\left(\min_{\mathbf{u}_s\in\widetilde{\mathcal{U}}}
\left\{\delta \widetilde{\boldsymbol{\lambda}}_s^\top\mathbf{u}_s+
\frac{\delta^2\left\|\mathbf{u}_s\right\|_2^2}{4\mu}-\tau\right\}\right)_+
=\min_{\mathbf{u}_s\in\widetilde{\mathcal{U}}}
\max\left\{\delta \widetilde{\boldsymbol{\lambda}}_s^\top\mathbf{u}_s+
\frac{\delta^2\left\|\mathbf{u}_s\right\|_2^2}{4\mu}-\tau,0\right\}.
}
\eeq
Introduce $\rho_s$ to represent the positive part and $\zeta_s$ to represent $\left\|\mathbf{u}_s\right\|_2^2/(4\mu)$ through a rotated second-order cone constraint. The corresponding epigraph and rotated second-order cone constraints are
\beq
\eqfitu{
\rho_s\ge \delta \widetilde{\boldsymbol{\lambda}}_s^{\top}\mathbf{u}_s+\delta ^2\zeta_s-\tau,\quad
\rho_s\ge 0,\quad
\left\| \mathbf{u}_s \right\|_2^2\le 4\mu \zeta_s,\quad
\zeta_s\ge 0.
}
\eeq
Since $\rho_s$ enters the sample-average dual objective with a positive coefficient and is minimized, it equals the positive part at optimum. Moreover, $\zeta_s$ can be chosen as the smallest value permitted by the rotated second-order cone constraint. The preceding derivation is for $\mu>0$; the conic representation with $\mu\ge0$ follows by closure. Substituting these epigraph and rotated second-order cone constraints into \eqref{eq:DRCVaR_dual_formulation} yields the objective and constraints of \eqref{eq:DRCVAR_problem_SOCP}. This proves the SOCP reformulation. \QED

\section{Price Trajectories}\label{subsec:appendix_price}
See Fig. \ref{fig:Price_March}- Fig. \ref{fig:Price_July}.
\begin{figure}[!t]
	\centering
	\includegraphics[width=\linewidth]{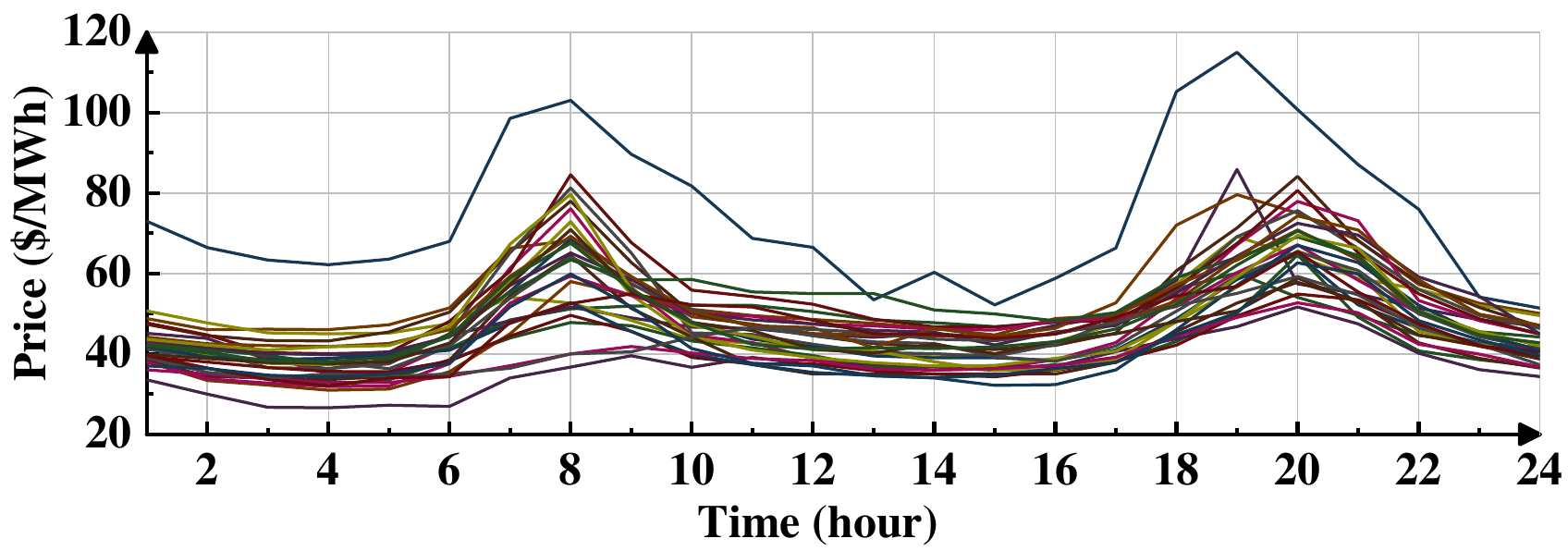}
	\caption{Electricity price trajectories in March 2025.}
	\label{fig:Price_March}
\end{figure}

\begin{figure}[!t]
	\centering
	\includegraphics[width=\linewidth]{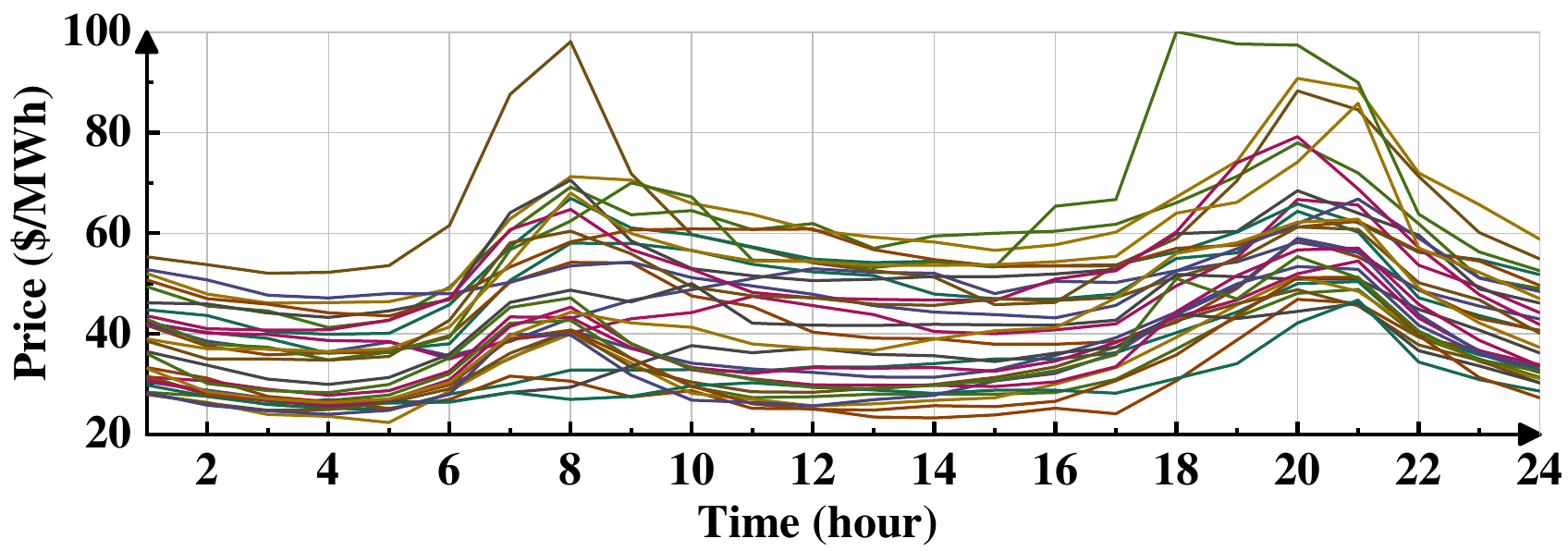}
	\caption{Electricity price trajectories in April 2025.}
	\label{fig:Price_April}
\end{figure}

\begin{figure}[!t]
	\centering
	\includegraphics[width=\linewidth]{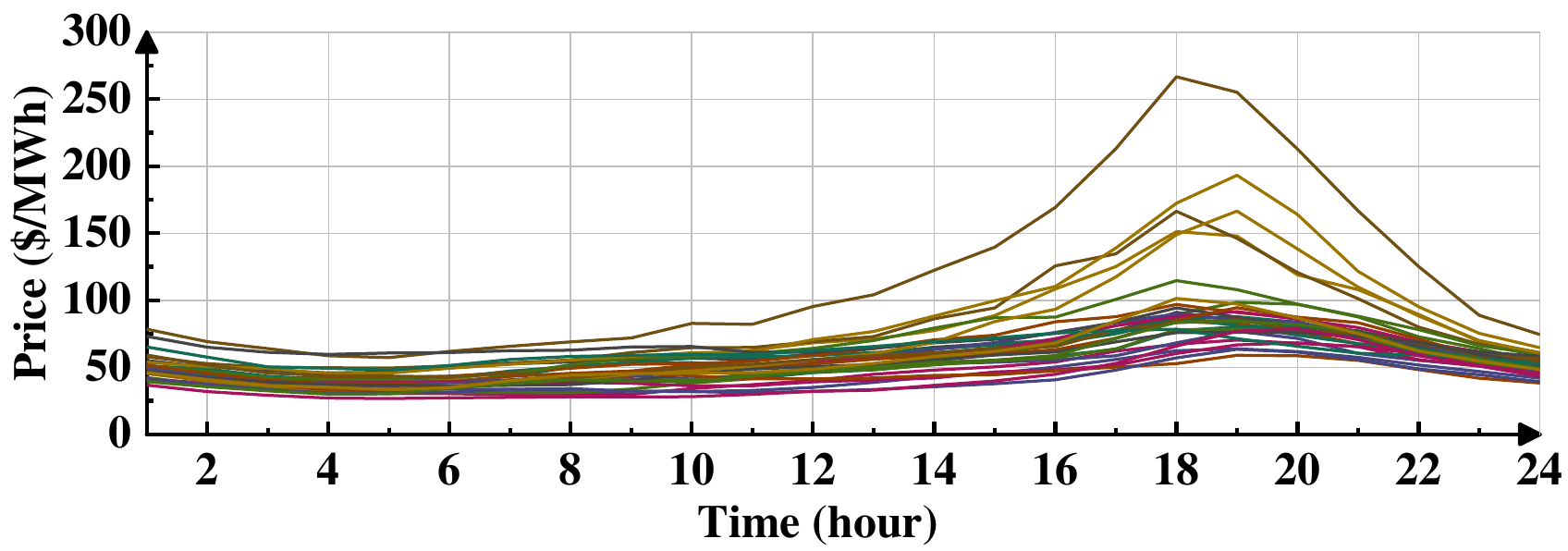}
	\caption{Electricity price trajectories in July 2025.}
	\label{fig:Price_July}
\end{figure}